\title{\Large{ {\bf Local Strategy-proofness and Dictatorship }}\thanks{We are grateful to Evan Piermont and Arunava Sen for their insightful discussions at various stages of this work. We also thank seminar and conference participants at IIT Jodhpur, IISER Bhopal, ISI Delhi, ISI Tezpur, Hitotsubashi University, Delhi School of Economics, ISI Kolkata, and the University of Essex for their valuable comments and suggestions.}}
\author{Abinash Panda\thanks{Department of Economics, Shiv Nadar Institution of Eminence, NH - 91, Gautam Buddha Nagar, Uttar Pradesh, India - 201314. Email: \texttt{ap280@snu.edu.in}}, Anup Pramanik\thanks{Department of Economics, Shiv Nadar Institution of Eminence, NH - 91, Gautam Buddha Nagar, Uttar Pradesh, India - 201314. Email: \texttt{anup.pramanik@snu.edu.in}.} and Ragini Saxena\thanks{Department of Economics, University of Rochester, Rochester, NY 14627, United States. Email: \texttt{rsaxena6@ur.rochester.edu}.}} 
\begin{document}

\maketitle

\begin{abstract} We investigate preference domains under which every unanimous and locally strategy-proof social choice function (scf) satisfies dictatorship. We identify a condition on domains called \textit{connected with distinct neighbours} which is necessary for dictatorship under unanimity and local strategy-proofness. Further, we show that this condition is sufficient within the class of domains where every unanimous and locally strategy-proof scf satisfies \textit{tops-onlyness}. While a complete characterization remains open, we also show that on domains that are connected with distinct neighbours, unanimity and strategy-proofness (a stronger requirement) imply dictatorship.

\end{abstract}

\noindent {\sc Keywords}. social choice functions; local strategy-proofness; connected with distinct neighbours domains; dictatorship. \\

\noindent {\sc JEL classification}: D71; D82

\newpage

\section{Introduction}

The theory of mechanism design is fundamental to understanding how collective decisions are made based on private information held by individual agents. A central focus in the literature has been the study of mechanisms that are strategy-proof (or dominant strategy incentive-compatible), meaning that truthfully revealing one’s preferences is a weakly dominant strategy for every agent. Strategy-proofness is a stringent requirement with strong implications. In particular, in the context of voting, the Gibbard-Satterthwaite Theorem (\cite{gibbard1973manipulation} and \cite{satterthwaite1975strategy}) establishes that, under unrestricted preferences, the only social choice functions (scfs) that are both strategy-proof and unanimous (a weak efficiency condition) are dictatorial.

A refinement of this framework is local strategy-proofness, which relaxes the standard requirement that agents must have no incentive to misreport their preferences across the entire preference domain. Instead, it requires incentive compatibility only with respect to a restricted set of nearby, or ``local" preferences. This restriction is motivated by practical considerations: in many real-world settings, agents may be unwilling or unable to report preferences that differ substantially from their true ones, due to cognitive limitations, informational constraints, or institutional barriers.  The class of locally strategy-proof scfs is often thought to be broader than the class of  strategy-proof scfs. However, earlier work by \cite{Carroll12} and \cite{sato2013sufficient} demonstrates that for many important preference domains, the two notions coincide when localness is defined naturally, such as through adjacency. This phenomenon, known as local-global equivalence, has significant theoretical and practical implications for mechanism design.

This paper examines the standard voting framework, in which a finite set of agents (or voters) holds strict preferences over a finite set of alternatives. These preferences are private information, and a scf selects a single alternative for each profile of reported preferences. Following \cite{sato2013sufficient}, we adopt a natural notion of localness: one preference ordering is said to be adjacent (or local) to another if it differs by a single swap of two consecutively ranked alternatives.

Our study is motivated by a fundamental asymmetry between local and global strategy-proofness. If local strategy-proofness and unanimity imply dictatorship on a domain, then strategy-proofness and unanimity, being stronger requirements, must also imply dictatorship on that domain. However, the converse does not hold: there exist many important domains where strategy-proofness and unanimity imply dictatorship, but local strategy-proofness and unanimity do not. For instance, such behaviour arises in \textit{circular domains} discussed in \cite{sato2010circular} and in several important domains analysed in \cite{Aswal03}. This opens the door to designing desirable, non-dictatorial scfs that satisfy local strategy-proofness. From a mechanism design perspective, this is particularly compelling, as local strategy-proofness is a natural and often realistic relaxation of strategy-proofness. This motivates the central question of our paper: on which domains do local strategy-proofness and unanimity imply dictatorship? Understanding when dictatorship persists under this weaker requirement helps clarify the extent to which relaxing incentive constraints expands the space of implementable social choice functions. Throughout, we restrict attention to minimally rich domains, i.e., domains in which each alternative is top-ranked in at least one preference ordering in the domain. Our main results are as follows:

\begin{enumerate}
\item Necessary Condition: We introduce the notion of the \textit{connected with distinct neighbours} property and establish that any domain on which every unanimous and locally strategy-proof scf is dictatorial must satisfy this property (Theorem \ref{thm:1}).

\item Sufficient Condition: We show that this necessary condition is also sufficient for a broad class of domains satisfying an additional restriction, which we call \textit{L-tops-only} domains - those in which every unanimous and locally strategy-proof scf satisfies tops-onlyness (Theorem \ref{thm:2}).

\item Additional Implication: We also show that every unanimous and strategy-proof scf defined on a domain satisfying the connected with distinct neighbours property is dictatorial (Theorem \ref{thm:3}).
\end{enumerate}

Informally, a domain is said to be connected with distinct neighbours if it satisfies two key structural properties. First, the domain must be connected, meaning that any two preferences in the domain can be linked through a sequence of adjacent preferences. Second, and more critically, the domain must exhibit a form of local diversity around each preference. For any given preference, consider the set of preferences that share the same top-ranked alternative and are connected through paths that preserve this top alternative throughout - this set is referred to as the top-connected closure of the preference. The requirement is that this top-connected closure must have at least two neighbouring preferences (i.e., preferences that are adjacent to this set) whose top-ranked alternatives are distinct. It is worth noting that well-known domains such as the single-peaked and single-crossing domains do not satisfy this property. However, the union of single-peaked and single-dipped domains (with respect to the same linear order over alternatives) does satisfy it, as does the unrestricted domain.

The work closest to ours is \cite{hong2023unanimity}, who study the problem of identifying domains on which unanimity and local strategy-proofness imply dictatorship.\footnote{\cite{hong2023unanimity}'s result also applies to the local version of ordinal Bayesian incentive compatibility.} They focus on the class of \textit{sparsely connected domains without restoration} (SCD) and characterize domains within this class using the \textit{disagreement property}. Specifically, they show that the disagreement property is both necessary and sufficient for unanimity and local strategy-proofness to imply dictatorship on SCD domains. Additionally, they establish that connectedness combined with the disagreement property forms a necessary condition for dictatorship under unanimity and local strategy-proofness. In contrast, our work identifies a stronger necessary condition - the connected with distinct neighbours property - which implies both connectedness and the disagreement property (see Remark \ref{R7}). Moreover, we provide a counterexample (Example \ref{ex6}) showing that the converse does not hold: there exist domains that satisfy connectedness and the disagreement property but not the connected with distinct neighbours property, thereby demonstrating that our condition is indeed strictly stronger. While \cite{hong2023unanimity} provide a characterization result within the SCD domains, their result follows as a special case of our Theorem \ref{thm:2}, since SCD domains are L-tops-only. Our sufficient condition, by contrast, applies to a broader class of domains and does not require the SCD structure. More broadly, our approach focuses on identifying structural conditions on domains rather than working within a fixed class such as SCD domains. We elaborate on these relationships in Section \ref{sec4}.

Our work offers new insight into the literature on local-global equivalence. This line of research in the voting framework begins with \cite{sato2013sufficient}, who provide a sufficient condition on domains under which local strategy-proofness implies strategy-proofness. Subsequently, \cite{kumar2021local} formulate the local-global equivalence problem more generally within the context of an environment, represented as a graph where nodes correspond to admissible preferences and edges encode the notion of localness.  They characterize environments in which local strategy-proofness implies strategy-proofness.\footnote{See \cite{cho2025local} and \cite{kumar2025equivalence} for recent generalizations to directed networks and broader domain structures.} \footnote{Local-global equivalence has also been studied in models with monetary transfers and quasilinear preferences (see, e.g., \cite{archer2008truthful}, \cite{Carroll12}, \cite{archer2014truthful}, \cite{mishra2016local} and \cite{kumar2024local}).} However, when restricting attention to unanimous scfs, such a characterization does not exist. \cite{kumar2021localb} and \cite{hong2023unanimity} provide sufficient conditions under which local strategy-proofness implies strategy-proofness for unanimous scfs, but these conditions are not necessary. In particular, \cite{kumar2021localb} shows that domains satisfying the \textit{pairwise no-restoration property} (Property P) ensure local-global equivalence for unanimous scfs, while \cite{hong2023unanimity} establish a similar result for SCD domains.  The SCD condition is slightly weaker than Property P, and under minimally rich domains, the two coincide. Our Theorem \ref{thm:2} provides a sufficiency result that is independent of both Property P and SCD, which we discuss in detail in Section \ref{sec4}.

Our work also contributes to the broader literature on \textit{dictatorial domains}, i.e., preference domains where strategy-proofness and unanimity imply dictatorship (see, for example, \cite{Aswal03}, \cite{sato2010circular}, \cite{pramanik2015further}, \cite{chatterji2023taxonomy}, among others). Since our Theorem \ref{thm:3} establishes that domains satisfying the connected with distinct neighbors property are dictatorial, our findings contribute to this literature. Moreover, dictatorial scfs satisfy  tops-only property. Theorem \ref{thm:3} demonstrate that connected with distinct neighbours domains are tops-only, providing new insights into the literature on tops-only domains (see, for example, \cite{Weymark08}, \cite{Chatt11}, and \cite{chatterji2018random}).

The remainder of the paper is organized as follows. Section \ref{sec2} introduces the model and key definitions. Section \ref{sec3} presents our main results, including a necessary condition and a sufficient condition under which unanimity and local strategy-proofness imply dictatorship. Section \ref{sec4} examines additional implications by showing that strategy-proofness and unanimity imply dictatorship on domains satisfying the connected with distinct neighbours property. Section \ref{sec5} discusses how our results contribute to the existing literature and outlines directions for future research. Finally, Section \ref{sec6} concludes the paper.

\section{The Framework} \label{sec2}
Let $A$ be a finite set of alternatives with $|A|=m\geq 3$ and $N=\{1,2,\ldots,n\}$ be a finite set of voters with $n\geq 2$. Each voter $i\in N$ has a preference $P_i$ over $A$, which is assumed to be a \emph{linear order} (i.e., an antisymmetric, complete, and transitive binary relation).  Let $\mathbb{P}$ denote the set of all such linear orders over $A$, which we refer to as the \emph{unrestricted domain}. A \emph{domain} is any subset $\mathbb{D} \subseteq \mathbb{P}$; we assume all voters share the same domain $\mathbb{D}$. A \emph{preference profile} $P = (P_1, \dots, P_n) \in \mathbb{D}^n$ specifies the preferences of all voters. 

For any $P_i\in \mathbb{P}$ and $k \in\{1,\ldots,m\}$, let $r_k(P_i)$ denote the $k^{th}$ ranked alternative in $P_i$, i.e., $|\{a\in A: a\;P_i\;r_k(P_i)\}|=k-1$. For any $P_i\in \mathbb{P}$ and $a\in A$, let $r(P_i,a)$ denote its rank in $P_i$. Note that for any $P_i\in \mathbb{P}$, $k\in \{1,2,\ldots,m\}$ and $a\in A$, $r_k(P_i)=a$ if and only if $r(P_i,a)=k$. Two linear orders $P_i$ and $P'_i$ are called \emph{adjacent} (denoted $P_i \sim P'_i$) if there exist $a, b \in A$ such that $r(P_i, a) =r(P_i, b) +1$, $r(P'_i, a) =r(P_i, b)$, $r(P'_i, b) =r(P_i, a)$ and for all $c\in A \setminus \{a, b\}$, $r(P_i, c) =r(P'_i, c)$.

\begin{defn}\rm A \emph{social choice function} (scf) $f$ is a mapping from $\mathbb{D}^n$ to $A$, i.e., $f: \mathbb{D}^n\rightarrow A$.
\end{defn}

We now describe several key properties of scfs that are central to this study.

The first is \emph{unanimity}, which requires that if all voters rank a particular alternative $a$ as their top choice, then the function must select $a$. 

\begin{defn}\rm A scf $f: \mathbb{D}^n\rightarrow A$ satisfies \emph{unanimity} if for any $a\in A$ and $P\in \mathbb{D}^n$ with $r(P_i, a)=1$ for all $i\in N$, we have $f(P)=a$.
\end{defn}

Next, we consider two notions of strategy-proofness. A scf is \emph{locally strategy-proof} if a voter cannot benefit by misreporting to a preference adjacent to their true preference. In contrast, a scf is \emph{strategy-proof} if a voter cannot benefit from any form of misrepresentation. Note that local strategy-proofness is weaker than strategy-proofness.

\begin{defn}\rm A scf $f: \mathbb{D}^n\rightarrow A$ is \emph{locally manipulable} by a voter $i\in N$ at profile $P = (P_i,P_{-i})$ if there exists $P'_i\in \mathbb{D}$ with $P_i\sim P'_i$ such that $f(P'_i,P_{-i})\;P_i\;f(P_i,P_{-i})$. The scf $f$ is \emph{locally strategy-proof} if it is not locally manipulable by any voter at any profile.
\end{defn}

\begin{defn}\rm A scf $f: \mathbb{D}^n\rightarrow A$ is \emph{manipulable} by a voter $i\in N$ at profile $P = (P_i,P_{-i})$ if there exists $P'_i\in \mathbb{D}$ such that $f(P'_i,P_{-i})\;P_i\;f(P_i,P_{-i})$. The scf $f$ is \emph{strategy-proof} if it is not manipulable by any voter at any profile.
\end{defn}

The final property we consider is dictatorship, which plays a central role in this paper. Dictatorship requires that the outcome of the scf always coincides with the top-ranked alternative of some fixed voter, regardless of the preferences of others.

\begin{defn}\rm  A scf $f: \mathbb{D}^n\rightarrow A$ satisfies \emph{dictatorship} if there exists a voter $i\in N$ such that for any $P\in \mathbb{D}^n$, $f(P)= r_1(P_i)$.
\end{defn}

In this paper, we focus on domains that satisfy a basic richness condition: for every alternative, there exists at least one preference ordering in the domain where that alternative is ranked first.

\begin{defn}\rm A domain $\mathbb{D}\subseteq \mathbb{P}$ is \emph{minimally rich} if for any $a\in A$ there exists a linear order $P_i\in \mathbb{D}$ such that $r_1(P_i)=a$.
\end{defn}

Let $\mathbb{D}^{DICT}$ denote the set of minimally rich domains on which every unanimous and strategy-proof scf satisfies dictatorship; we refer to these as dictatorial domains. Similarly, let $\mathbb{D}^{LDICT}$ be the set of minimally rich domains where every unanimous and locally strategy-proof scf satisfies dictatorship. Since local strategy-proofness is a weaker requirement than strategy-proofness, it follows that any domain in $\mathbb{D}^{LDICT}$ must also belong to $\mathbb{D}^{DICT}$. The following example demonstrates that this inclusion is strict - that is, $\mathbb{D}^{LDICT} \subset \mathbb{D}^{DICT}$.

\begin{example}\label{ex1}\rm Let $A=\{a_1,a_2,a_3,a_4\}$. The domain $\mathbb{D}=\{P_i^1,P_i^2,P_i^3,P_i^4,P_i^5,P_i^6,P_i^7,P_i^8\}$ is specified in Table \ref{T1}. $\mathbb{D}$ is called circular domain and it is a dictatorial domain (see, \cite{sato2010circular} and \cite{Chatt11} ). An important property of $\mathbb{D}$ is that any pair of preferences in $\mathbb{D}$ are not adjacent. Therefore, any scf defined on $\mathbb{D}$ trivially satisfies local strategy-proofness. In particular, well-known rules that satisfy desirable property like unanimity, anonymity, non-dictatorship etc. ( for example, median voter rules, scoring rules etc.) are locally strategy-proof on $\mathbb{D}$.

\begin{table}[h!]
\centering

\begin{tabular}{ c c c c c c c c }
				$P_i^1$ & $P_i^2$ & $P_i^3$ & $P_i^4$ & $P_i^5$ & $P_i^6$ & $P_i^7$  & $P_i^8$ \\
\hline 
$a_1$ & $a_1$ & $a_2$ & $a_2$ & $a_3$ & $a_3$ & $a_4$  & $a_4$ \\ 
$a_2$ & $a_4$ & $a_1$ & $a_3$ & $a_2$ & $a_4$ & $a_1$  & $a_3$ \\
$a_3$ & $a_3$ & $a_4$ & $a_4$ & $a_1$ & $a_1$ & $a_2$  & $a_2$ \\ 
$a_4$ & $a_2$ & $a_3$ & $a_1$ & $a_4$ & $a_2$ & $a_3$  & $a_1$ 
\end{tabular}
\caption{Circular Domain} 
\label{T1} 
\end{table}
\end{example}

The following remark is immediate.{\footnote{Throughout the paper, we use $\subset$ and $\subseteq$ to denote strict and weak set inclusion, respectively.}}

\begin{remark}\label{R1}$\mathbb{D}^{LDICT}\subset \mathbb{D}^{DICT}$
\end{remark}

Characterizing the set of domains in which unanimity and strategy-proofness imply dictatorship remains an open problem - that is, the structure of $\mathbb{D}^{DICT}$ is not fully understood in the literature. In this paper, we focus on local strategy-proofness, a weaker notion of strategy-proofness. Our objective is to characterize the domains in which unanimity and local strategy-proofness imply dictatorship. 

\section{Results}\label{sec3}

\subsection{A Necessary Condition}

In this section, we provide a necessary condition on domains where locally strategy-proof and unanimous scfs satisfy dictatorship. Let $\mathbb{D}$ be a domain. A path $(P^1_i,P^2_i,\ldots,P^t_i)$ in $\mathbb{D}$ is a sequence of distinct preferences in $\mathbb{D}$ satisfying the property that consecutive preferences are adjacent, i.e., $P^k_i\sim P^{k+1}_i$ for all $k = 1,\ldots,t-1$. We call the path $(P^1_i,P^2_i,\ldots,P^t_i)$ as a path from $P^1_i$ to $P^t_i$ in $\mathbb{D}$. A pair of preference orderings $P_i, P_i'\in \mathbb{D}$ is connected in $\mathbb{D}$ if there exists a path from $P_i$ to $P_i'$ in $\mathbb{D}$. $\mathbb{D}$ is connected if every pair of preference orderings $P_i, P_i'\in \mathbb{D}$ is connected in $\mathbb{D}$. 

Let $\bar{\mathbb{D}}\subseteq \mathbb{D}$ be a sub-domain of $\mathbb{D}$. An ordering $P_i$ is a neighbour of $\bar{\mathbb{D}}$ in $\mathbb{D}$ if $P_i\in \mathbb{D}\setminus \bar{\mathbb{D}}$ and there exists an ordering $P'_i\in \bar{\mathbb{D}}$ such that $P'_i$ and $P_i$ are adjacent. For any $\bar{\mathbb{D}}\subseteq \mathbb{D}$, let $N(\bar{\mathbb{D}}, \mathbb{D})$ be the set of neighbours of $\bar{\mathbb{D}}$ in $\mathbb{D}$, i.e., $N(\bar{\mathbb{D}}, \mathbb{D})=\{P_i\in \mathbb{P}: P_i \textit{ is a neighbour of } \bar{\mathbb{D}} \textit{ in } \mathbb{D}\}$.

A pair $P_i, P_i'\in \mathbb{D}$ is top-connected in $\mathbb{D}$ if there exists a path $(P_i=P^1_i,P^2_i,\ldots,P^t_i=P_i')$ in $\mathbb{D}$ such that $r_1(P_i^1)=r_1(P_i^2)=\ldots =r_1(P_i^t)$.
For any $P_i\in \mathbb{D}$, the top-connected closure of $P_i$ in $\mathbb{D}$, denoted as $\mathbb{D}^{TCC}(P_i)$, is the set of preferences in $\mathbb{D}$ that are top-connected to $P_i$ in $\mathbb{D}$, i.e., $\mathbb{D}^{TCC}(P_i)=\{P'_i\in \mathbb{D}: P_i \text{ and } P_i' \text{ are top-connected in } \mathbb{D}\}\cup P_i$. Note that, for any $P_i\in \mathbb{D}$, $\mathbb{D}^{TCC}(P_i)\subseteq \mathbb{D}$. For any $P_i\in \mathbb{D}$, $\mathbb{D}^{TCC}(P_i)$ has two distinct neighbours in $\mathbb{D}$ if there exist $P'_i, P''_i\in N(\mathbb{D}^{TCC}(P_i),\mathbb{D})$  such that $r_1(P'_i)\neq r_1(P''_i)$.

A domain is connected with distinct neighbours if it satisfies two conditions: $(1)$ the domain is connected (i.e., any two preferences are connected by a sequence of adjacent preferences), and $(2)$ for every preference $P_i$, its top-connected closure ( i.e., the set of all preferences that are top-connected to $P_i$) has at least two neighbouring preferences with distinct top alternatives.

\begin{defn} A domain $\mathbb{D}$ is \textbf{connected with distinct neighbours} if 
\begin{enumerate}
\item \textbf{Connectedness}: $\mathbb{D}$ is connected, and
\item \textbf{Distinct neighbours}: For any $P_i\in \mathbb{D}$, $\mathbb{D}^{TCC}(P_i)$ has two distinct neighbours in $\mathbb{D}$.
\end{enumerate}
\end{defn}

In the following, we will illustrate connected with distinct neighbours domains via examples.

\begin{example}\label{ex2}\rm Let $A=\{a_1,a_2,a_3,a_4\}$, and consider the domain $\mathbb{D}=\{P_i^1,P_i^2,P_i^3,P_i^4,P_i^5,P_i^6,P_i^7\}$,  as specified in Table \ref{T2}. Notably, $\mathbb{D}$ constitutes a single-crossing domain with respect to the linear order $a_1> a_2> a_3> a_4$ and the linear order $P_i^1\succ P_i^2\succ P_i^3\succ P_i^4\succ P_i^5\succ P_i^6\succ P_i^7$ (see \cite{saporiti2009strategy}). 
\begin{table}[h!]
\centering
\begin{tabular}{ c c c c c c c }
$P_i^1$ & $P_i^2$ & $P_i^3$ & $P_i^4$ & $P_i^5$ & $P_i^6$ & $P_i^7$   \\
\hline 
$a_1$ & $a_2$ &$ a_2$ & $a_3$ & $a_3$ & $a_3$ & $a_4$ \\ 
$a_2$ & $a_1$ & $a_3$ & $a_2$ & $a_2$ & $a_4$ & $a_3$ \\
$a_3$ & $a_3$ &$ a_1$ & $a_1$ & $a_4$ & $a_2$ & $a_2$ \\ 
$a_4$ & $a_4$ & $a_4$ & $a_4$ & $a_1$ & $a_1$ & $a_1$ 
\end{tabular}
\caption{Single crossing domain} 
\label{T2} 
\end{table}
An important property of $\mathbb{D}$ is that $P_i^2$ is the single ordering that is adjacent to $P_i^1$ in $\mathbb{D}$. Similarly, $P_i^6$ is the single ordering that is adjacent to $P_i^7$ in $\mathbb{D}$. For any other ordering $P_i^j$, $j\in\{2,3,4,5,6\}$, we have that only $P_i^{j-1}$ and $P_i^{j+1}$ are adjacent to $P_i^j$ in $\mathbb{D}$. 

From the observation in the previous paragraph, $\mathbb{D}$ is a connected domain. Moreover, $\mathbb{D}^{TCC}(P_i^1)=\{P_i^1\}$, $\mathbb{D}^{TCC}(P_i^2)=\mathbb{D}^{TCC}(P_i^3)=\{P_i^2,P_i^3\}$, $\mathbb{D}^{TCC}(P_i^4)=\mathbb{D}^{TCC}(P_i^5)=\mathbb{D}^{TCC}(P_i^6)=\{P_i^4, P_i^5, P_i^6\}$ and $\mathbb{D}^{TCC}(P_i^7)=\{P_i^7\}$. Note that,  $\mathbb{D}$ does not satisfy the distinct neighbours property. For instance, $N(\mathbb{D}^{TCC}(P_i^1),\mathbb{D})=\{P_i^2\}$.
\end{example}

\begin{example}\label{ex3}\rm Let $A=\{a_1,a_2,a_3,a_4\}$, and consider the domain $\mathbb{D}=\{P_i^1,P_i^2,P_i^3,P_i^4,P_i^5,P_i^6,P_i^7,P_i^8\}$ as specified in Table \ref{T3}. Note that, $\mathbb{D}$ constitutes a single-peaked  domain with respect to the linear order $a_1> a_2> a_3> a_4$ (see \cite{moulin1980strategy}). 
\begin{table}[h!]
\centering
\begin{tabular}{ c c c c c c c c}
$P_i^1$ & $P_i^2$ & $P_i^3$ & $P_i^4$ & $P_i^5$ & $P_i^6$ & $P_i^7$ & $P_i^8$   \\
\hline 
$a_1$ & $a_2$ & $a_2$ & $a_2$ & $a_3$ & $a_3$ & $a_3$ & $a_4$ \\ 
$a_2$ & $a_1$ & $a_3$ & $a_3$ & $a_2$ & $a_2$ & $a_4$ & $a_3$ \\
$a_3$ & $a_3$ &$ a_1$ & $a_4$ & $a_4$ & $a_1$ & $a_2$ & $a_2$ \\ 
$a_4$ & $a_4$ & $a_4$ & $a_1$ & $a_1$ & $a_4$ & $a_1$ & $a_1$ 
\end{tabular}
\caption{Single-peaked domain} 
\label{T3} 
\end{table}
 
Observe that $\mathbb{D}$ is a connected domain. Furthermore, $\mathbb{D}^{TCC}(P_i^1)=\{P_i^1\}$, $\mathbb{D}^{TCC}(P_i^2)=\mathbb{D}^{TCC}(P_i^3)=\mathbb{D}^{TCC}(P_i^4)=\{P_i^2,P_i^3,P_i^4\}$, $\mathbb{D}^{TCC}(P_i^5)=\mathbb{D}^{TCC}(P_i^6)=\mathbb{D}^{TCC}(P_i^7)=\{P_i^5, P_i^6, P_i^7\}$ and $\mathbb{D}^{TCC}(P_i^8)=\{P_i^8\}$. Notably,  $\mathbb{D}$ does not satisfy the distinct neighbours property.  For example, $N(\mathbb{D}^{TCC}(P_i^1),\mathbb{D})=\{P_i^2\}$.
\end{example}

\begin{example}\label{ex4}\rm Let $A=\{a_1,a_2,a_3,a_4\}$, and consider the domain $\mathbb{D}$ as specified in Table \ref{T3}: $$\mathbb{D}=\{P_i^1,P_i^2,P_i^3,P_i^4,P_i^5,P_i^6,P_i^7,P_i^8,P_i^9,P_i^{10},
P_i^{11},P_i^{12}\}$$ 
\begin{table}[h!]
\centering
\begin{tabular}{ c c c c c c c c c c c c}
$P_i^1$ & $P_i^2$ & $P_i^3$ & $P_i^4$ & $P_i^5$ & $P_i^6$ & $P_i^7$ & $P_i^8$ & $P_i^9$ & $P_i^{10}$ & $P_i^{11}$ & $P_i^{12}$  \\
\hline 
$a_1$ & $a_2$ & $a_2$ & $a_3$ & $a_3$ & $a_3$ & $a_4$ & $a_1$ & $a_1$ & $a_4$ & $a_4$ & $a_4$ \\ 
$a_2$ & $a_1$ & $a_3$ & $a_2$ & $a_2$ & $a_4$ & $a_3$ & $a_2$ & $a_4$ & $a_1$ & $a_1$ & $a_3$ \\
$a_3$ & $a_3$ &$ a_1$ & $a_1$ & $a_4$ & $a_2$ & $a_2$ & $a_4$ & $a_2$ & $a_2$ & $a_3$ & $a_1$\\ 
$a_4$ & $a_4$ & $a_4$ & $a_4$ & $a_1$ & $a_1$ & $a_1$ & $a_3$ & $a_3$ & $a_3$ & $a_2$ & $a_2$ 
\end{tabular}
\caption{Union of single-peaked and single-dipped domain} 
\label{T3} 
\end{table}
 
Observe that $\mathbb{D}$ is the union of a single-peaked and a single-dipped domain with respect to the linear order $a_1> a_2> a_3> a_4$ (this example is adapted from Example 5 in \cite{hong2023unanimity}). Additionally, $\mathbb{D}$ is a connected domain. Moreover, $\mathbb{D}^{TCC}(P_i^1)=\mathbb{D}^{TCC}(P_i^8)=\mathbb{D}^{TCC}(P_i^9)=\{P_i^1,P_i^8,P_i^9\}$, $\mathbb{D}^{TCC}(P_i^2)=\mathbb{D}^{TCC}(P_i^3)=\{P_i^2,P_i^3\}$, $\mathbb{D}^{TCC}(P_i^4)=\mathbb{D}^{TCC}(P_i^5)=\mathbb{D}^{TCC}(P_i^6)=\{P_i^4, P_i^5, P_i^6\}$ and $\mathbb{D}^{TCC}(P_i^7)=\mathbb{D}^{TCC}(P_i^{10})=\mathbb{D}^{TCC}(P_i^{11})=\mathbb{D}^{TCC}(P_i^{12})=\{P_i^7,P_i^{10},
P_i^{11},P_i^{12}\}$. It can be verified that $\mathbb{D}$ satisfies the distinct neighbours property.
\end{example}

We show that domains where local strategy-proofness and unanimity imply dictatorship are connected with distinct neighbours.

\begin{theorem}\label{thm:1}
Let $\mathbb{D}$ be a minimally rich domain. If any unanimous and locally strategy-proof scf $f: \mathbb{D}^n\rightarrow A$ satisfies dictatorship, then $\mathbb{D}$ is connected with distinct neighbours. 
\end{theorem}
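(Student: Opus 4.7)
The plan is to prove the contrapositive: if $\mathbb{D}$ fails the connected-with-two-distinct-neighbours property, I will exhibit a unanimous, locally strategy-proof, non-dictatorial scf on $\mathbb{D}^n$. There are two sub-cases depending on which defining condition fails, and in both I let voters $3,\ldots,n$ be silent, so the construction reduces to a two-voter rule.

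\emph{Disconnected case.} If $\mathbb{D}$ is not connected, write $\mathbb{D}=\mathbb{D}_1\cup\mathbb{D}_2$ with both parts nonempty and no adjacencies across. Define the ``switching dictator''
\[
f(P_1,\ldots,P_n)=\begin{cases}r_1(P_1)&\text{if }P_1\in\mathbb{D}_1,\\ r_1(P_2)&\text{if }P_1\in\mathbb{D}_2.\end{cases}
\]
Unanimity is immediate; non-dictatorship follows from minimal richness (which lets me pick profiles in either region whose tops differ). Local strategy-proofness holds because any local move of voter $1$ stays inside one $\mathbb{D}_k$, where the rule is a classical dictator rule, and voter $2$ affects the outcome only when $P_1\in\mathbb{D}_2$, where she is her own dictator.

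\emph{Connected but missing two distinct neighbours.} Pick $P^*_i\in\mathbb{D}$ such that every element of $N(\mathbb{D}^{TCC}(P^*_i),\mathbb{D})$ has the same top; write $S:=\mathbb{D}^{TCC}(P^*_i)$, let $a$ denote the common top of preferences in $S$, and $b$ the common top of the neighbours of $S$. Minimal richness plus connectedness force $a\neq b$ and $N(S,\mathbb{D})\neq\emptyset$. The key structural observation is that whenever $P_1\in S$ is adjacent to $P'_1\in\mathbb{D}\setminus S$, the swap must be the top-two swap and the second-ranked alternative of $P_1$ must be $b$; otherwise $P'_1$ either retains top $a$ (so lies in $S$) or has top different from $b$, contradicting the hypothesis. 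I then define
\[
f(P_1,\ldots,P_n)=\begin{cases} r_1(P_1)& \text{if }P_1\notin S,\\ a & \text{if }P_1\in S \text{ and } a\,P_2\,b,\\ b & \text{if }P_1\in S \text{ and } b\,P_2\,a.\end{cases}
\]
Unanimity reduces to cases on the common top ($a$, $b$, or neither), and non-dictatorship uses minimal richness to produce witnesses: some $P_2$ with $b\,P_2\,a$ gives $f=b\neq a=r_1(P_1)$ for $P_1\in S$, ruling out voter $1$ as dictator, while any $P_1\notin S$ with top different from $r_1(P_2)$ rules out voter $2$.

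\emph{Main obstacle.} The crux is local strategy-proofness for voter $1$ across the boundary of $S$ in the second construction. By the structural observation above, any boundary move has the form $P_1=(a,b,\ldots)\leftrightarrow P'_1=(b,a,\ldots)$ with the remaining ranks unchanged, so the outcome can only toggle between $a$ and $b$, which voter $1$'s true ranking of $\{a,b\}$ always resolves in favour of the truthful report. The remaining subcases (both in $S$ or both outside $S$) are routine, and voter $2$'s local strategy-proofness follows because a local swap of $a$ and $b$ in $P_2$ toggles the outcome against her own preference over $\{a,b\}$. I expect this boundary analysis, and in particular its reliance on the hypothesis that all neighbours of $S$ share the single top $b$, to be the main technical step.
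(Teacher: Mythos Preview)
Your proposal is correct and follows essentially the same contrapositive strategy as the paper, with the same two constructions: the switching-dictator rule in the disconnected case and the $\{a,b\}$-rule on $S=\mathbb{D}^{TCC}(P^*_i)$ in the connected case. Your explicit structural observation about boundary swaps (that any adjacency from $S$ to its complement must be the top-two swap $a\leftrightarrow b$) makes transparent what the paper leaves implicit in its one-line verification of voter~1's local strategy-proofness.
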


The proof of theorem \ref{thm:1} is in the Appendix. The proof follows a contrapositive argument. Suppose the domain $\mathbb{D}$ is not connected with distinct neighbours. We then construct a unanimous and locally strategy-proof scf that does not satisfy dictatorship. In particular, for each case where $\mathbb{D}$ violates the connected with distinct neighbours property, we exhibit an scf satisfying all axioms but avoiding dictatorship. 

Let $\mathbb{D}^{CDN}$ be the set of minimally rich domains that are connected with distinct neighbours. The following remark follows directly from Theorem \ref{thm:1}.
\begin{remark}\label{R2} $\mathbb{D}^{LDICT}\subseteq \mathbb{D}^{CDN}$.
\end{remark}

An important question is whether $\mathbb{D}^{LDICT}=\mathbb{D}^{CDN}$. At present, we are unable to establish this result. In fact, demonstrating whether $\mathbb{D}^{LDICT}\subset \mathbb{D}^{CDN}$ appears to be challenging. However, in the following subsection, we show that the equality holds under certain restrictions on the domains.

\subsection{A Sufficient Condition}
In this section, we investigate whether the necessary condition in Theorem \ref{thm:1} is also sufficient. A key difficulty is that we are unable to establish that, on domains satisfying connectedness with distinct neighbours, local strategy-proofness and unanimity imply tops-onlyness. However, note that any dictatorial scf necessarily satisfies tops-onlyness. This observation motivates restricting attention to domains where local strategy-proofness and unanimity imply tops-onlyness, which we refer to as L-tops-only domains.\footnote{Domains on which incentive and efficiency conditions imply tops-onlyness have been studied in the literature; see, for instance, \cite{Weymark08}, \cite{Chatt11} and \cite{chatterji2018random}. Thus, this restriction is not ad hoc.} Within this class, we show that connected with distinct neighbours property is sufficient for dictatorship. A tops-only scf selects an alternative based solely on the top-ranked alternatives in each individual's preference ordering, disregarding how the remaining alternatives are ranked.

\begin{defn} A scf $f: \mathbb{D}^n\rightarrow A$ satisfies tops-onlyness if for any $P,P'\in \mathbb{D}^n$ such that $r_1(P_i)=r_1(P'_i)$ for all $i\in N$, we have $f(P)=f(P')$. 
\end{defn}

We restrict our attention to the class of domains where local strategy-proofness and unanimity imply tops-onlyness. We refer to these domains as L-tops-only domains. This restriction is not vacuous and is satisfied by several domains studied in the literature (see \cite{kumar2021localb} and \cite{hong2023unanimity} for examples).

\begin{defn} A domain $\mathbb{D}\subseteq \mathbb{P}$ is L-tops-only if any unanimous and local strategy-proof scf $f: \mathbb{D}^n\rightarrow A$ satisfies tops-onlyness.
\end{defn}

Now we are ready to state our main result of this sub-section.

\begin{theorem}\label{thm:2}
Let $\mathbb{D}$ be a minimally rich and L-tops-only domain. If $\mathbb{D}$ is connected with distinct neighbours, then any unanimous and locally strategy-proof scf $f: \mathbb{D}^n\rightarrow A$ satisfies dictatorship.
\end{theorem}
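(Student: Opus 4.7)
The plan is to reduce to showing $f$ is strategy-proof on $\mathbb{D}^n$ and invoke Theorem~\ref{thm:3}. Since $\mathbb{D}$ is L-tops-only, $f$ is tops-only; by minimal richness, the induced $g: A^n \to A$ given by $g(r_1(P_1), \ldots, r_1(P_n)) = f(P_1, \ldots, P_n)$ is well-defined on all of $A^n$ and inherits unanimity from $f$.

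Suppose, for contradiction, $f$ is manipulable: there exist a voter $i$, preferences $P_i, P_i' \in \mathbb{D}$, and a profile $P_{-i} \in \mathbb{D}^{n-1}$ with $y := f(P_i', P_{-i}) \; P_i \; f(P_i, P_{-i}) =: x$. Let $a = r_1(P_i)$, $a_{-i} = r_1(P_{-i})$, and $h := g(\cdot, a_{-i})$, so $h(a) = x$; because $a$ is the top of $P_i$ and $y \; P_i \; x$, necessarily $x \neq a$ and $y \neq x$. Using connectedness of $\mathbb{D}$, fix a path $P_i = R^1 \sim \cdots \sim R^T = P_i'$ in $\mathbb{D}$, and track $z_k := h(r_1(R^k))$. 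Tops-onlyness makes $z_k$ constant between top-swaps, while at each top-swap step with tops going from $c$ to $c'$, local strategy-proofness of $f$ yields the binary constraint: either $h(c) = h(c')$, or $h(c) = c$ and $h(c') = c'$.

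The main step is a propagation argument that pushes the value $x$ throughout $A$. Starting from $h(a) = x \neq a$, at every top-swap exit from a preference with top $a$ the LS constraint forces the neighbour top $c$ to satisfy $h(c) = x$ (the ``both-fixed'' alternative would require $h(a) = a$, contradicting $x \neq a$). The CDN property, which guarantees two distinct neighbour tops at every top-connected closure, provides the redundancy needed to iterate this along the top-graph of $\mathbb{D}$, which is connected by connectedness of $\mathbb{D}$ and has minimum degree at least two. Carrying this through every top alternative yields $h \equiv x$ on $A$, so in particular $y = h(r_1(P_i')) = x$, contradicting $y \neq x$, and establishing strategy-proofness.

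The principal obstacle is controlling the propagation at the alternative $x$ itself when it appears as the current top. There, $h(x) = x$ is a fixed point, and the LS binary constraint at an exit now admits both the constant-$x$ branch and an ``identity'' branch $h(\text{exit top}) = \text{exit top}$, so the propagation threatens to branch into a mixed regime. Ruling out the identity branches relies on the two simultaneous exits supplied by CDN at every TCC (which must jointly satisfy the binary LS constraints) together with the global consistency imposed on $h$ by the many cycles that CDN forces in the top-graph. The delicate case analysis needed to close this gap is the technical heart of the proof; once completed, $f$ is strategy-proof on $\mathbb{D}^n$, and Theorem~\ref{thm:3}, applied to the minimally rich CDN domain $\mathbb{D}$, concludes that $f$ is dictatorial.
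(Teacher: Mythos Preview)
Your route is genuinely different from the paper's. The paper does not pass through strategy-proofness at all: it proves directly (Proposition~\ref{P1}) that every unanimous, tops-only, locally strategy-proof scf on a CDN domain is dictatorial, by analysing the induced graph $G(\mathbb{D})$, establishing decisiveness along paths and cycles for two voters (Claims~\ref{C1}--\ref{C6}, Lemmas~\ref{L2}--\ref{L5}), and then extending to $n$ voters by a cloning-and-induction argument. Theorem~\ref{thm:2} is then immediate from L-tops-onlyness.

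Your proposal has a genuine gap at exactly the point you flag. The binary LS constraint you derive on $h$ --- at each edge $(c,c')$ of $G(\mathbb{D})$, either $h(c)=h(c')$ or $h(c)=c$ and $h(c')=c'$ --- is correct, but it is \emph{not} strong enough to force $h$ to be constant once the propagation reaches $x$. Concretely, take $G(\mathbb{D})$ to be two triangles $\{a,b,x\}$ and $\{x,c,d\}$ sharing the vertex $x$; this graph is connected, has minimum degree two, and contains cycles, so it is consistent with all of Lemma~\ref{L1}. The assignment $h(a)=h(b)=h(x)=x$, $h(c)=c$, $h(d)=d$ satisfies the binary constraint at every edge yet is not constant. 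Thus neither ``two exits at every TCC'' nor ``many cycles'' rules out the identity branch at $x$ on the level of $h$ alone. What actually excludes such $h$ are constraints coming from \emph{other voters'} local strategy-proofness together with unanimity --- precisely the interplay captured by the paper's two-voter decisiveness lemmas --- and your single-voter propagation never invokes them. Filling this in would require machinery of the same weight as Claims~\ref{C1}--\ref{C6}, so the detour through Theorem~\ref{thm:3} does not buy a shortcut; the ``delicate case analysis'' you defer \emph{is} the proof.
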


The proof of Theorem \ref{thm:2} is provided in the Appendix. It is lengthy and relies on concepts from graph theory. We begin by introducing the concept of an undirected graph induced by any given domain, defined as follows: the nodes (or vertices) represent the alternatives, and any two distinct nodes form an edge if there exists a pair of preference orderings in the domain such that these orderings are adjacent and their top-ranked alternatives correspond to the two nodes. We first prove a foundational lemma that establishes a key property of the graph induced by a domain that is connected with distinct neighbours. We use induction on the number of agents, and at each step of the induction, we rely on the property of the induced graph to ultimately establish dictatorship.

Let $\mathbb{D}^{L-TOPS-ONLY}$ denote the set of minimally rich domains that are L-tops-only. Observe that if a  scf satisfies dictatorship, it must also satisfy the tops-only property. As a consequence, we obtain $\mathbb{D}^{LDICT}\subseteq \mathbb{D}^{L-TOPS-ONLY}$. Moreover, the domains presented in Examples \ref{ex2} and \ref{ex3} belong to $\mathbb{D}^{L-TOPS-ONLY}$ but not to $\mathbb{D}^{LDICT}$. This immediately leads to the following remark.

\begin{remark}\label{R3}  $\mathbb{D}^{LDICT}\subset \mathbb{D}^{L-TOPS-ONLY}$.
\end{remark} 

 Theorem \ref{thm:2} establishes that $\mathbb{D}^{CDN}\cap \mathbb{D}^{L-TOPS-ONLY}\subseteq \mathbb{D}^{LDICT}$. By Remark \ref{R2} and \ref{R3}, we have that $\mathbb{D}^{LDICT}\subseteq \mathbb{D}^{CDN}\cap \mathbb{D}^{L-TOPS-ONLY}$. This immediately leads to the following remark.
 
 \begin{remark}\label{R4}  $\mathbb{D}^{LDICT}=\mathbb{D}^{CDN}\cap \mathbb{D}^{L-TOPS-ONLY}$.
\end{remark} 

If it were the case that $\mathbb{D}^{CDN}\subseteq \mathbb{D}^{L-TOPS-ONLY}$, then one would obtain $\mathbb{D}^{LDICT}=\mathbb{D}^{CDN}$, yielding a complete characterization. However, Theorem \ref{thm:2} does not establish this inclusion. Nevertheless, it is important to observe that the unrestricted domain $\mathbb{P}$ and the domain presented in Example \ref{ex4} belong to $\mathbb{D}^{CDN}$ and $\mathbb{D}^{L-TOPS-ONLY}$.\footnote{It follows from \cite{hong2023unanimity} that the domain presented in Example \ref{ex4} belongs to $\mathbb{D}^{L-TOPS-ONLY}$.}

\section{Strengthening the Sufficiency Result}\label{sec4}

In the previous section, we established a sufficiency result under the additional assumption that the domain is L-tops-only. A natural question is whether this restriction can be dispensed with, i.e., whether connectedness with distinct neighbours alone ensures that local strategy-proofness and unanimity imply dictatorship. We are unable to establish that connected with distinct neighbours domains necessarily satisfy the L-tops-only property, and therefore such domains may lie outside the class of L-tops-only domains. This raises a natural concern regarding the scope of our sufficiency result.

In this section, we address this concern by strengthening the incentive requirement. We show that if local strategy-proofness is replaced by strategy-proofness, then unanimity and strategy-proofness imply tops-onlyness on connected with distinct neighbours domains. In fact, we establish a stronger result: unanimity and strategy-proofness imply dictatorship on such domains. Thus, even if connected with distinct neighbours domains fall outside the class of L-tops-only domains, they cannot lie outside the class of dictatorial domains.

\begin{theorem}\label{thm:3}  Let $\mathbb{D}$ be a minimally rich and connected with distinct neighbours domain. If $f: \mathbb{D}^n\rightarrow A$ satisfies unanimity and strategy-proofness, then it satisfies dictatorship.
\end{theorem}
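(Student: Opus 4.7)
The plan is to combine two ingredients: (i) establishing that strategy-proofness and unanimity, together with the CDN structure, force $f$ to be tops-only, and then (ii) reusing the graph-theoretic induction from the proof of Theorem \ref{thm:2} to extract a dictator. Strategy-proofness is strictly stronger than local strategy-proofness, and this extra leverage is precisely what should let us bypass the L-tops-only assumption imposed in Theorem \ref{thm:2}.

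For step (i), I would use the standard option-set machinery. For each voter $i$ and each $P_{-i}\in\mathbb{D}^{n-1}$, set $O_i(P_{-i})=\{f(Q_i,P_{-i}):Q_i\in\mathbb{D}\}$; strategy-proofness implies that $f(P_i,P_{-i})$ is the $P_i$-maximal element of $O_i(P_{-i})$. Given two preferences $P_i,P_i'$ sharing top alternative $a$, I want to show they yield the same outcome. When they lie in a common top-connected component, I can walk along a top-preserving adjacent path and apply local strategy-proofness to each single swap, neither element of which is $a$: the outcome cannot change at any step since the top remains $a$ throughout. The delicate case is when $P_i,P_i'$ share the top $a$ but lie in distinct top-connected components; here the two-distinct-neighbours property supplies ``detour'' preferences with different top alternatives that, combined with connectedness of $\mathbb{D}$, let me transport information about option sets between components and conclude invariance of the outcome.

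For step (ii), once tops-onlyness is in hand, I would invoke the foundational graph-theoretic lemma behind Theorem \ref{thm:2}, asserting that the top-alternative graph induced by a CDN domain is rich enough to sustain pairwise swap arguments, and run induction on $n$. The base case $n=2$ reduces to a Gibbard-Satterthwaite-style argument over pairs of top alternatives joined by an edge of the induced graph; minimal richness together with two-distinct-neighbours supplies the required variability. For the inductive step, fix agent $n$'s top alternative, regard $f$ as an $(n-1)$-agent SCF over the remaining tops, verify that it inherits strategy-proofness and unanimity (using tops-onlyness to freely replace the fixed preference with any other having the same top), and invoke the inductive hypothesis to pin down a dictator among $\{1,\dots,n-1\}$ or, failing that, identify agent $n$ as the dictator.

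The hard part will be the tops-onlyness step in the cross-component case. The CDN condition does not automatically fuse two top-connected components that share the same top, so there is no ready-made top-preserving path to walk along, and the argument has to exploit the full (non-local) strength of strategy-proofness by comparing option sets across configurations differing by more than a single adjacent swap. A secondary, more technical difficulty is verifying that when the induction reduces from $n$ to $n-1$ agents, the effective problem retains enough CDN-like structure to continue; tops-onlyness is exactly what makes this reduction well-defined, so the two steps must be carried out in tandem rather than sequentially.
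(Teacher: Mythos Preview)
Your plan diverges from the paper's in its overall architecture, and the divergence matters because the step you flag as ``the hard part'' is one the paper never attempts.

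The paper does \emph{not} prove tops-onlyness for general $n$ and then invoke Proposition~\ref{P1}. Instead it (a) proves the two-voter case directly (Proposition~\ref{P2}), and (b) lifts to arbitrary $n$ via the Aswal--Chatterji--Sen reduction (Proposition~\ref{P3}), which says that any minimally rich domain that is dictatorial for two voters is dictatorial for all $n$. The two-voter argument re-derives analogues of Claims~\ref{C1}--\ref{C3} (now Claims~\ref{C9}--\ref{C11}) using full strategy-proofness in place of the tops-onlyness hypothesis: for instance, in Claim~\ref{C9}, once $f(\bar P_1,\bar P_2)=a$ at an edge profile, one shows $f(\bar P_1,P_2)=a$ for \emph{every} $P_2$ with top $b$ by letting voter~1 deviate to an arbitrary (not adjacent) preference with top $b$ and invoking unanimity---this is exactly where non-local strategy-proofness replaces tops-onlyness. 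With only two voters, unanimity is always one deviation away, which is what makes these direct arguments go through.

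Your route---establish tops-onlyness for all $n$, then run the Theorem~\ref{thm:2} induction---has a genuine gap at the point you yourself identify. The option-set argument gives $f(P_i,P_{-i})=\max_{P_i} O_i(P_{-i})$, but this yields tops-onlyness only when $r_1(P_i)\in O_i(P_{-i})$; nothing in unanimity or the CDN structure guarantees this for general $P_{-i}$ when $n\ge 3$. Your proposal to ``transport information about option sets between components'' via the two-distinct-neighbours property is not a concrete mechanism, and it is unclear how the detour preferences (whose tops differ from $a$) constrain the option set enough to force $a\in O_i(P_{-i})$ or otherwise equate outcomes across components. The paper sidesteps this entirely: it never needs cross-component tops-onlyness, because the Aswal--Chatterji--Sen lemma reduces everything to $n=2$, where the direct strategy-proofness arguments suffice. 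If you want to salvage your architecture, you would need an independent proof that CDN domains are tops-only under strategy-proofness for all $n$---which is essentially the open conjecture $\mathbb{D}^{CDN}\subseteq\mathbb{D}^{L\text{-}TOPS\text{-}ONLY}$ restricted to the strategy-proof case, and is not obviously easier than the theorem itself.
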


The proof of Theorem \ref{thm:3} is provided in the appendix. It relies on a key result from \cite{Aswal03}, which shows that domains where unanimous and strategy-proof scfs satisfy dictatorship for two agents also exhibit the same property for an arbitrary number of agents. This result allows us to reduce the problem from an arbitrary number of voters to the two-voter case (see Proposition \ref{P3}). The proof follows the same steps as Theorem \ref{thm:2}, demonstrating that unanimous and strategy-proof scfs necessarily satisfy dictatorship. However, the critical distinction in this proof is that we show strategy-proofness and unanimity directly imply tops-onlyness, whereas, in the proof of Theorem \ref{thm:2}, tops-onlyness was assumed a priori.

Theorem \ref{thm:3} establishes that  $\mathbb{D}^{CDN}\subseteq \mathbb{D}^{DICT}$. In fact, this inclusion is strict due to the domain presented in Example \ref{ex1}. This immediately leads to the following remark.

\begin{remark}\label{R5} $\mathbb{D}^{CDN}\subset \mathbb{D}^{DICT}$.
\end{remark}

Let $\mathbb{D}$ be such that $\mathbb{D}\in \mathbb{D}^{CDN}$. By Theorem \ref{thm:2}, if $\mathbb{D} \in \mathbb{D}^{L-TOPS-ONLY}$, then $\mathbb{D} \in \mathbb{D}^{LDICT}$. However, Theorem \ref{thm:3} establishes that if $\mathbb{D} \notin \mathbb{D}^{L-TOPS-ONLY}$, then $\mathbb{D} \in \mathbb{D}^{DICT}$.

It is important to highlight that one strand of the literature is dedicated to characterizing dictatorial domains, i.e., the set 
$\mathbb{D}^{DICT}$. While several sufficient conditions have been established (see, for instance, \cite{Aswal03}, \cite{sato2010circular}, \cite{pramanik2015further}, \cite{chatterji2023taxonomy}, among others), a complete characterization remains an open question. Theorem \ref{thm:3} makes a significant contribution to this line of research.

Another strand of literature focuses on identifying domains in which unanimity and strategy-proofness jointly imply the tops-only property, commonly referred to as tops-only domains. Various studies have proposed sufficient conditions for this property (see, for instance, \cite{Weymark08}, \cite{Chatt11} and \cite{chatterji2018random}); however, a full characterization has yet to be achieved. Since dictatorship necessarily implies tops-onlyness, Theorem \ref{thm:3} also contributes to this body of work.

\section{Discussion}\label{sec5}

In this section, we discuss the connection between our results and existing findings in the literature, and demonstrate how it offers a new perspective on these results.

\subsection{Results on local-global equivalence for unanimous scfs}
Recent literature explores preference domains where  local strategy-proofness implies strategy-proofness for unanimous scfs (see, for instance, \cite{kumar2021localb} and \cite{hong2023unanimity}). This property is referred to as the local-global equivalence for unanimous scfs (uLGE property).

In the context of random voting, \cite{kumar2021localb} shows that pairwise no-restoration property (or Property P) is sufficient to ensure uLGE. We will first define Property P and then establish its relationship with our connected with distinct neighbours property. Let $\mathbb{D}$ be a domain and $a,b\in A$ be a pair of alternatives. A path $\pi=(P_i^1,\ldots,P_i^t)$ in $\mathbb{D}$ satisfies \textit{ no $\{a,b\}$-restoration} if the relative ranking of $a$ and $b$ is reversed at most once along $\pi$, i.e., there does not exist integers $q$, $r$ and $s$ with $1\leq q < r < s \leq t$ such that either $(i)$ $a\;P_i^q\;b$, $b\;P_i^r\;a$ and $a\;P_i^s\;b$, or $(ii)$ $b\;P_i^q\;a$, $a\;P_i^r\;b$ and $b\;P_i^s\;a$. The domain $\mathbb{D}$ satisfies the Property P if for all distinct $P_i,P'_i\in \mathbb{D}$ and distinct $a,b \in A$, there exists a path from $P_i$ to $P'_i$ in $\mathbb{D}$ with no $\{a,b\}$-restoration.

Our connected with distinct neighbours property is independent of the Property P. For example, the domain in Example \ref{ex2} and \ref{ex3} satisfy the Property P but not the connected with distinct neighbours property. Conversely, the domain in Example \ref{ex5} satisfies the connected with distinct neighbour property but not the Property P.

\begin{example}\label{ex5}\rm Let $A=\{a_1,a_2,a_3,a_4\}$. Consider the domain $$\mathbb{D}^*=\{P_i^1,P_i^2,P_i^3,P_i^4,P_i^5,P_i^6,P_i^7,P_i^8,P_i^9,
P_i^{10},P_i^{11},P_i^{12},P_i^{13}\}$$  specified in Table \ref{T5}. Figure \ref{F1} shows all paths induced by the adjacent preferences in $\mathbb{D}^*$. One can verify that $\mathbb{D}^*$ satisfies the connected with distinct neighbours property. We omit the detailed argument, as the verification is straightforward. However, the domain $\mathbb{D}^*$ does not satisfy Property P. As highlighted in Figure \ref{F1}, there are exactly two paths between $P_i^{13}$ and $P_i^4$. One path is $(P_i^{13},P_i^1,P_i^2,P_i^3,P_i^4)$, and another is $(P_i^{13},P_i^1,P_i^{12},P_i^{11},P_i^{10},P_i^9,P_i^8,
P_i^7,P_i^6,P_i^5,P_i^4)$. It can be verified that for the pair $a_2$ and $a_3$, neither of these paths satisfies no $\{a_2,a_3\}$-restoration. Therefore, the domain $\mathbb{D}^*$ does not satisfy Property P.

\begin{table}[h!]
\centering
\begin{tabular}{ c c c c c c c c c c c c c c}
$P_i^1$ & $P_i^2$ & $P_i^3$ & $P_i^4$ & $P_i^5$ & $P_i^6$ & $P_i^7$ & $P_i^8$ & $P_i^9$ & $P_i^{10}$ & $P_i^{11}$ & $P_i^{12}$ & $P_i^{13}$  \\
\hline 
$a_1$ & $a_2$ & $a_2$ & $a_3$ & $a_3$ & $a_3$ & $a_4$ & $a_4$ & $a_4$ & $a_4$ & $a_1$ & $a_1$ & $a_1$ \\ 
$a_2$ & $a_1$ & $a_3$ & $a_2$ & $a_2$ & $a_4$ & $a_3$ & $a_3$ & $a_1$ & $a_1$ & $a_4$ & $a_2$ & $a_3$ \\
$a_3$ & $a_3$ &$ a_1$ & $a_1$ & $a_4$ & $a_2$ & $a_2$ & $a_1$ & $a_3$ & $a_2$ & $a_2$ & $a_4$ & $a_2$ \\ 
$a_4$ & $a_4$ & $a_4$ & $a_4$ & $a_1$ & $a_1$ & $a_1$ & $a_2$ & $a_2$ & $a_3$ & $a_3$ & $a_3$ & $a_4$ 
\end{tabular}
\caption{The domain $\mathbb{D}^*$} 
\label{T5} 
\end{table}

\end{example}

\begin{figure}[htbp]
    \centering
    \includegraphics[width=.8\textwidth]{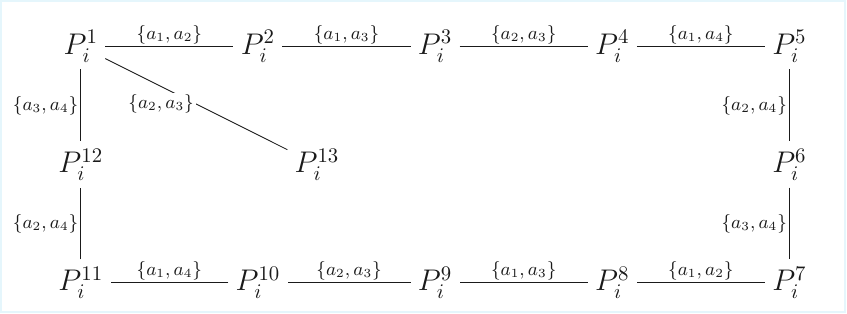} 
    \caption{Connections in $\mathbb{D}^*$}
    \label{F1}
\end{figure}

\cite{hong2023unanimity} focus on local version of ordinal Bayesian incentive compatibility for deterministic scfs and establish uLGE for domains satisfying a property called \textit{Sparsely Connected Domain without Restoration} (or SCD), which requires the existence of paths without restoration for all pairs of alternatives where at least one is first-ranked in some preference within the domain. This condition is slightly weaker than Property P, as the no-restoration requirement applies only to a subset of all pairs of alternatives. However, when restricted to a minimally rich domain, SCD and Property P become equivalent.

We conclude this section by noting that our results contribute new insights to the literature on uLEG. In particular, every domain in $\mathbb{D}^{LDICT}$ satisfy the uLGE property and Remark \ref{R4} establishes that $\mathbb{D}^{LDICT}=\mathbb{D}^{CDN}\cap \mathbb{D}^{L-TOPS-ONLY}$. We emphasize that the domain $\mathbb{D}^*$ introduced in Example \ref{ex5}, does not satisfy the Property P. Since $\mathbb{D}^*$ is a minimally rich domain, it does not belong to the class of SCD domains. Consequently, the results of \cite{kumar2021localb} and \cite{hong2023unanimity} do not imply that $\mathbb{D}^*$ satisfies the uLGE property. 
However, the domain $\mathbb{D}^*$ does satisfy the connected with distinct neighbours property, and Remark \ref{R6} below establishes that it belongs to $\mathbb{D}^{L-TOPS-ONLY}$. Therefore, by Theorem \ref{thm:2}, it follows that $\mathbb{D}^*$ satisfies the uLGE property. The proof of Remark \ref{R6} is provided in the appendix.

\begin{remark} \label{R6} The domain $\mathbb{D}^*$ in Example \ref{ex5} belongs to $\mathbb{D}^{L-TOPS-ONLY}$.
\end{remark}

\subsection{Dictatorship results in \cite{hong2023unanimity}}

\cite{hong2023unanimity} characterizes the class of domains where local strategy-proofness and unanimity imply dictatorship within the class of SCD domains. In particular, they identify a condition called disagreement property which is necessary and sufficient for unanimity and local strategy-proofness to imply dictatorship among the class of SCD domains (see Theorem 1 in \cite{hong2023unanimity}). A domain $\mathbb{D}$ satisfies the disagreement property if for any pair of alternatives $a$ and $b$, such that $r_1(P_i)=a$ and $r_1(P'_i)=b$ for some adjacent preferences $P_i,P'_i\in \mathbb{D}$, there exist preferences $\bar{P_i},\hat{P_i}\in \mathbb{D}$ such that $r_1(\bar{P_i}), r_1(\hat{P_i})\notin \{a,b\}$, $a\;\bar{P_i}\;b$ and $b\;\hat{P_i}\;a$. However, SCD structure and the disagreement property together do not form a necessary condition for a locally strategy-proof and unanimous scf to be dictatorial. They provide a weaker necessary condition, showing that connectedness along with the disagreement property is a necessary condition for any locally strategy-proof and unanimous scf to satisfy dictatorship (see Proposition 4 in \cite{hong2023unanimity}). In the following remark, we emphasize that our necessary condition is stronger than that of \cite{hong2023unanimity}. 

\begin{remark} \label{R7}
Let $\mathbb{D}$ be a domain that is connected with distinct neighbours. Then $\mathbb{D}$ is connected and satisfies the disagreement property.
\end{remark}

The proof of this remark is provided in the appendix. Furthermore, the domain presented in Example \ref{ex6} satisfies connectedness and the disagreement property but does not satisfy the connected with distinct neighbours property, demonstrating that our condition is indeed stricter.

\begin{example}\label{ex6}\rm Let $A=\{a_1,a_2,a_3,a_4\}$, and consider the domain $$\bar{\mathbb{D}}=\{P_i^1,P_i^2,P_i^3,P_i^4,P_i^5,P_i^6,P_i^7,P_i^8,P_i^9\},$$  as specified in Table \ref{T6}.
\begin{table}[h!]
\centering
\begin{tabular}{ c c c c c c c c c }
$P_i^1$ & $P_i^2$ & $P_i^3$ & $P_i^4$ & $P_i^5$ & $P_i^6$ & $P_i^7$ & $P_i^8$ & $P_i^9$  \\
\hline 
$a_1$ & $a_1$ & $a_2$ &$ a_2$ & $a_3$ & $a_3$ & $a_3$ & $a_4$ & $a_4$\\ 
$a_2$ & $a_2$ & $a_1$ & $a_3$ & $a_2$ & $a_2$ & $a_4$ & $a_3$ & $a_3$\\
$a_4$ & $a_3$ & $a_3$ &$ a_1$ & $a_1$ & $a_4$ & $a_2$ & $a_2$ & $a_1$\\ 
$a_3$ & $a_4$ & $a_4$ & $a_4$ & $a_4$ & $a_1$ & $a_1$ & $a_1$ & $a_2$
\end{tabular}
\caption{The domain $\bar{\mathbb{D}}$} 
\label{T6} 
\end{table}
Readers can verify that $\bar{\mathbb{D}}$ is connected and satisfies the disagreement property, but does not satisfy the connected with distinct neighbours property. The verification is straightforward, and we omit the details here.
\end{example}

While our necessary condition is stronger than that of \cite{hong2023unanimity}, our sufficient condition is also more general.  This is because all SCD domains are necessarily L-tops-only (see Proposition 1 in \cite{hong2023unanimity}). In particular, the domain $\mathbb{D}^*$ presented in Example \ref{ex5} is not SCD, and thus Theorem 1 of \cite{hong2023unanimity} cannot be applied to establish that local strategy-proofness and unanimity imply dictatorship on this domain. However, since $\mathbb{D}^*$ is L-tops-only (see Remark \ref{R6}) and satisfies the connected with distinct neighbours property, our Theorem \ref{thm:2} applies and yields the dictatorship result. This highlights the broader applicability of our sufficient condition relative to existing results in the literature.

\section{Conclusion} \label{sec6}

This paper studies preference domains where every unanimous and locally strategy-proof scf must be dictatorial. We introduce a novel domain condition - connected with distinct neighbours and establish it as a necessary condition for dictatorship under unanimity and local strategy-proofness. Moreover, we show that this condition is also sufficient when restricted to L-tops-only domains, leading to a precise characterization: the set of domains on which every unanimous and locally strategy-proof scf satisfies dictatorship is exactly the intersection of connected with distinct neighbours domains and L-tops-only domains. Further, we prove that on such domains, unanimity and strategy-proofness (a stronger requirement) guarantee dictatorship, contributing to the literature on dictatorial and tops-only domains. Figure \ref{F2} summarizes these findings, illustrating the relationships between domains and highlighting the gap between necessity and sufficiency.

\begin{figure}[htbp]
    \centering
    \includegraphics[width=.9\textwidth]{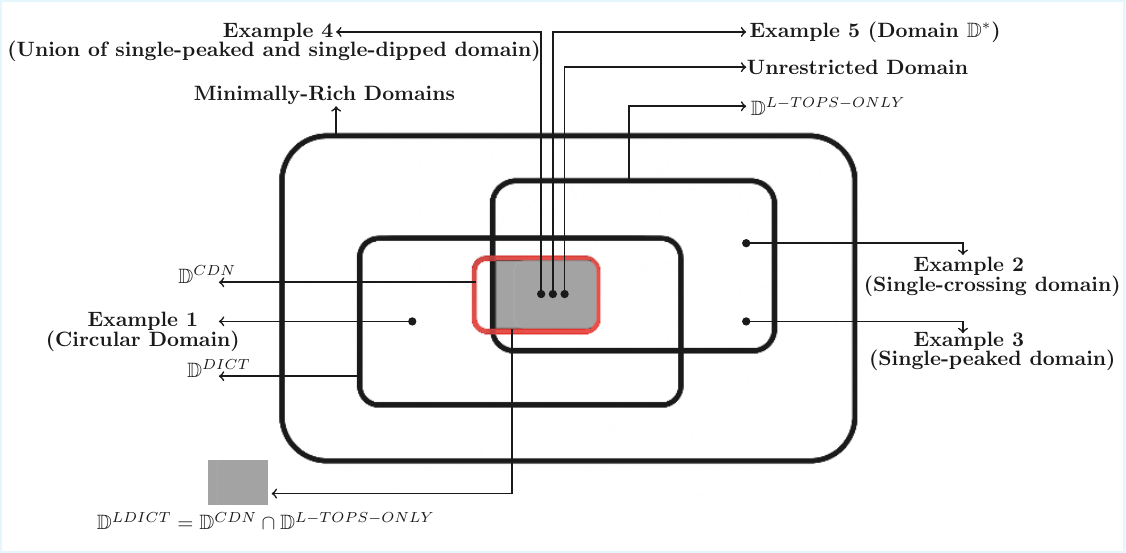} 
    \caption{A Summary of Results}
    \label{F2}
\end{figure}

Our necessary condition strengthens existing results (e.g., \cite{hong2023unanimity}), while our sufficiency result applies beyond domains satisfying SCD structure and disagreement property. We also provide new insights into when local strategy-proofness implies strategy-proofness for unanimous scfs, extending prior work (\cite{kumar2021localb} and \cite{hong2023unanimity}). A full characterization remains open, particularly whether connected with distinct neighbours domains are always L-tops-only. We conjecture that this inclusion holds, which would close the gap and provide a complete characterization. Future work could explore these results for randomized mechanisms, as well as extend our analysis to more general network-based notions of localness introduced in \cite{kumar2021local}.

\bibliographystyle{ecta}
\bibliography{order}

@article{Aswal03,
  title={Dictatorial domains},
  author={Aswal, Navin and Chatterji, Shurojit and Sen, Arunava},
  journal={Economic Theory},
  volume={22},
  pages={45--62},
  year={2003},
}

@article{Carroll12,
  title={When Are Local Incentive Constraints Sufficient?},
  author={Carroll, Gabriel},
  journal={Econometrica},
  volume={80},
  number={2},
  pages={661--686},
  year={2012},
  }

@article{Chatt11,
  title={Tops-only domains},
  author={Chatterji, Shurojit and Sen, Arunava},
  journal={Economic Theory},
  volume={46},
  pages={255--282},
  year={2011},
}

@article{gibbard1973manipulation,
  title={Manipulation of voting schemes: a general result},
  author={Gibbard, Allan},
  journal={Econometrica},
  volume={41},
  pages={587--601},
  year={1973},
  publisher={JSTOR}
}

@article{sato2013sufficient,
  title={A sufficient condition for the equivalence of strategy-proofness and nonmanipulability by preferences adjacent to the sincere one},
  author={Sato, Shin},
  journal={Journal of Economic Theory},
  volume={148},
  number={1},
  pages={259--278},
  year={2013},
  publisher={Elsevier}
}

@article{satterthwaite1975strategy,
  title={Strategy-proofness and Arrow's conditions: Existence and correspondence theorems for voting procedures and social welfare functions},
  author={Satterthwaite, Mark Allen},
  journal={Journal of economic theory},
  volume={10},
  number={2},
  pages={187--217},
  year={1975},
  publisher={Elsevier}
}

@article{Weymark08,
  title={Strategy-Proofness and the Tops-Only Property},
  author={Weymark, John A},
  journal={Journal of Public Economic Theory},
  volume={10},
  pages={7--26},
  year={2008},
}

@article{chatterji2023taxonomy,
  title={A taxonomy of non-dictatorial unidimensional domains},
  author={Chatterji, Shurojit and Zeng, Huaxia},
  journal={Games and Economic Behavior},
  volume={137},
  pages={228--269},
  year={2023},
  publisher={Elsevier}
}

@article{pramanik2015further,
  title={Further results on dictatorial domains},
  author={Pramanik, Anup},
  journal={Social Choice and Welfare},
  volume={45},
  pages={379--398},
  year={2015},
  publisher={Springer}
}

@article{kumar2021local,
  title={Local-global equivalence in voting models: A characterization and applications},
  author={Kumar, Ujjwal and Roy, Souvik and Sen, Arunava and Yadav, Sonal and Zeng, Huaxia},
  journal={Theoretical Economics},
  volume={16},
  number={4},
  pages={1195--1220},
  year={2021},
  publisher={Wiley Online Library}
}

@article{kumar2021localb,
  title={Local global equivalence for unanimous social choice functions},
  author={Kumar, Ujjwal and Roy, Souvik and Sen, Arunava and Yadav, Sonal and Zeng, Huaxia},
  journal={Games and Economic Behavior},
  volume={130},
  pages={299--308},
  year={2021},
  publisher={Elsevier}
}

@article{hong2023unanimity,
  title={Unanimity and local incentive compatibility in sparsely connected domains},
  author={Hong, Miho and Kim, Semin},
  journal={Social Choice and Welfare},
  volume={61},
  number={2},
  pages={385--411},
  year={2023},
  publisher={Springer}
}

@article{sato2010circular,
  title={Circular domains},
  author={Sato, Shin},
  journal={Review of Economic Design},
  volume={14},
  pages={331--342},
  year={2010},
  publisher={Springer}
}

@article{saporiti2009strategy,
  title={Strategy-proofness and single-crossing},
  author={Saporiti, Alejandro},
  journal={Theoretical Economics},
  volume={4},
  number={2},
  pages={127--163},
  year={2009},
  publisher={New York, NY: The Econometric Society}
}

@article{cho2025local,
  title={The Local-global Equivalence on General Networks: Applications},
  author={Cho, Wonki Jo and Park, Changwoo},
  journal={Available at SSRN 5219192},
  year={2025}
}

@article{kumar2025equivalence,
  title={On the Equivalence of Strategy-proofness and Directed Local Strategy-proofness under Preference Extensions},
  author={Kumar, Ujjwal and Roy, Souvik},
  year={2025}
}

@article{mishra2016local,
  title={Local incentive compatibility with transfers},
  author={Mishra, Debasis and Pramanik, Anup and Roy, Souvik},
  journal={Games and Economic Behavior},
  volume={100},
  pages={149--165},
  year={2016},
  publisher={Elsevier}
}

@article{kumar2024local,
  title={Local incentive compatibility on gross substitutes and other non-convex type-spaces},
  author={Kumar, Ujjwal and Roy, Souvik},
  journal={Journal of Mathematical Economics},
  volume={112},
  pages={102989},
  year={2024},
  publisher={Elsevier}
}

@inproceedings{archer2008truthful,
  title={Truthful germs are contagious: a local to global characterization of truthfulness},
  author={Archer, Aaron and Kleinberg, Robert},
  booktitle={Proceedings of the 9th ACM Conference on Electronic Commerce},
  pages={21--30},
  year={2008}
}

@article{archer2014truthful,
  title={Truthful germs are contagious: A local-to-global characterization of truthfulness},
  author={Archer, Aaron and Kleinberg, Robert},
  journal={Games and Economic Behavior},
  volume={88},
  pages={284--297},
  year={2014},
  publisher={Elsevier},
  doi={10.1016/j.geb.2013.04.007}
}

@article{moulin1980strategy,
  title={On strategy-proofness and single peakedness},
  author={Moulin, Herv{\'e}},
  journal={Public Choice},
  volume={35},
  number={4},
  pages={437--455},
  year={1980},
  publisher={Springer}
}

@article{chatterji2018random,
  title={On random social choice functions with the tops-only property},
  author={Chatterji, Shurojit and Zeng, Huaxia},
  journal={Games and Economic Behavior},
  volume={109},
  pages={413--435},
  year={2018},
  publisher={Elsevier}
}
\newpage
\appendix
\section*{Appendix} \label{Sec:app}

\subsection*{1. The Proof of Theorem \ref{thm:1}}
\begin{proof} Let $\mathbb{D}$ be a minimally rich domain. Suppose, any unanimous and local strategy-proof scf $f: \mathbb{D}^n\rightarrow A$ satisfies dictatorship. We will show that $\mathbb{D}$ is connected with distinct neighbours. We assume that $\mathbb{D}$ is not connected with distinct neighbours. In what follows, either $(i)$ $\mathbb{D}$ is not connected or $(ii)$ $\mathbb{D}$ is connected but for some $P_i\in \mathbb{D}$, $\mathbb{D}^{TCC}(P_i)$ does not have two distinct neighbours in $\mathbb{D}$. We consider following two cases.

Case $1$: Suppose $\mathbb{D}$ is not connected. We will show that it is possible to define unanimous and locally strategy-proof scf on $\mathbb{D}$ that does not satisfy dictatorship. Let $P^*_i, P^{**}_i\in \mathbb{D}$ such that $P^*_i$ and $P^{**}_i$ are not connected in $\mathbb{D}$. Let $\bar{\mathbb{D}}$ be the set of all orderings in $\mathbb{D}$ that are connected to $P^*_i$ i.e., $\bar{\mathbb{D}}=\{P''_i\in \mathbb{D}: P''_i \textit{ and } P^*_i \textit{ are connected in } \mathbb{D}\}\cup P^*_i$. 

Fix voters $1$ and $2$. Define $f^{1,2}:\mathbb{D}^n\rightarrow A$ as follows. For any $P\in \mathbb{D}^n$
	
		 \[ f^{1,2}(P)=
		\begin{cases}
			r_1(P_1) & \mbox{if } P_1\in \bar{\mathbb{D}} \\
			r_1(P_2) & \mbox{if } P_1\notin \bar{\mathbb{D}}.
		\end{cases}
		 \]
		
It is easy to verify that $f$ is unanimous and it does not satisfy dictatorship. We show that $f^{1,2}$ is locally strategy-proof. Consider an arbitrary preference profile $P\in \mathbb{D}^n$. Note that voter $j$, $j\in \{3,\ldots ,n\}$, cannot change the outcome by unilaterally deviating at $P$. Therefore, she cannot locally manipulate. Similarly, voter 2 cannot change the outcome by unilaterally deviating at $P$ if $P_1\in \bar{\mathbb{D}}$. If $P_1\notin \bar{\mathbb{D}}$, the outcome is voter 2's top ranked alternative. Therefore, voter 2 cannot locally manipulate at $P$. Finally, we will show that voter 1 cannot locally manipulate at $P$ as well. Note that for any $P'_i\in \bar{\mathbb{D}}$, if $P''_i\in \mathbb{D}$ is adjacent to $P'_i$, then  $P''_i\in \bar{\mathbb{D}}$. Similarly, for any $P'_i\notin \bar{\mathbb{D}}$, if $P''_i\in \mathbb{D}$ is adjacent to $P'_i$, then  $P''_i\notin \bar{\mathbb{D}}$. If $P_1\in \bar{\mathbb{D}}$, the outcome is voter 1's top ranked alternative. Therefore, voter 1 cannot locally manipulate at $P$ whenever $P_1\in \bar{\mathbb{D}}$. If $P_1\notin \bar{\mathbb{D}}$, then for any $P'_1$ such that $P_1$ and $P'_1$ are adjacent, we have $f^{1,2}(P)= f^{1,2}(P'_1,P_{-1})= r_1(P_2)$. Hence, voter 1 cannot locally manipulate at $P$.
	
Case $2$: Suppose $\mathbb{D}$ is connected and $P^*_i\in \mathbb{D}$ such that $\mathbb{D}^{TCC}(P^*_i)$ does not have two distinct neighbours in $\mathbb{D}$.  Since $\mathbb{D}$ is minimally rich and connected, $N(\mathbb{D}^{TCC}(P^*_i), \mathbb{D})$ is a non-empty set. Moreover, since $\mathbb{D}^{TCC}(P_i)$ does not have two distinct neighbours, for any $P'_i, P''_i \in N(\mathbb{D}^{TCC}(P_i), \mathbb{D})$, we have $r_1(P'_i)=r_1(P''_i)$. We assume that $r_1(P^*_i)=a$ and $r_1(P'_i)=b$ where $P'_i\in N(\mathbb{D}^{TCC}(P_i), \mathbb{D})$. 

Fix voters $1$ and $2$. Define $f^{1,2}:\mathbb{D}^n\rightarrow A$ as follows. For any $P\in \mathbb{D}^n$
	
		 \[ f^{1,2}(P)=
		\begin{cases}
			a & \mbox{if } P_1\in \mathbb{D}^{TCC}(P^*_i) \textit{ and } a\;P_2\;b \\
			b & \mbox{if } P_1\in \mathbb{D}^{TCC}(P^*_i) \textit{ and } b\;P_2\;a  \\
			r_1(P_1) & \mbox{if } P_1\notin \mathbb{D}^{TCC}(P^*_i) .
		\end{cases}
		 \]

It can be verified that $f^{1,2}$ is unanimous and it does not satisfy dictatorship. We show that $f^{1,2}$ is locally strategy-proof as well. Consider an arbitrary preference profile $P\in \mathbb{D}^n$. Note that voter $j$, $j\in \{3,\ldots ,n\}$, cannot change the outcome by unilaterally deviating at $P$. Therefore, she cannot locally manipulate. Similarly, if $P_1\notin \mathbb{D}^{TCC}(P^*_i)$, voter 2 cannot change the outcome by unilateral deviation. If $P_1 \in \mathbb{D}^{TCC}(P^*_i)$, the outcome is voter 2's preferred alternative between $a$ and $b$. Therefore, voter 2 cannot locally manipulate at $P$. Finally, we will show that voter 1 cannot locally manipulate at $P$ as well. If $P_1\notin \mathbb{D}^{TCC}(P^*_i)$, the outcome is voter 1's top ranked alternative. Therefore, voter 1 does not have any incentive to manipulate. If $P_1\in \mathbb{D}^{TCC}(P^*_i)$, then for any $P'_1$ such that $P_1$ and $P'_1$ are adjacent, we have  either $f^{1,2}(P)=a\;P_1\; b=f^{1,2}(P'_1,P_{-1})$ or $f^{1,2}(P)= f^{1,2}(P'_1,P_{-1})$. Hence, voter 1 cannot locally manipulate at $P$. 
\end{proof}

\subsection*{2. The Proof of Theorem \ref{thm:2} and Theorem \ref{thm:3}}

\subsubsection*{2.1 Foundational Lemma}

The proof of Theorem \ref{thm:2} relies on standard concepts from graph theory. We will introduce these concepts to begin with. An \textbf{undirected graph} $G$ is defined by $(N, E)$ where $N$ is a finite set and $E$ is a collection of unordered pairs of distinct elements from $N$.  The elements of $N$ are called \textbf{vertices} or \textbf{nodes} of graph $G$. The elements of $E$ are called \textbf{edges} of graph $G$. A \textbf{path} is a sequence of distinct vertices $(a^1,\ldots, a^k)$ such that $(a^j, a^{j+1})\in E$ for all $1 \leq j < k$. The path $(a^1,\ldots, a^k)$ is called a path from $a^1$ to $a^k$. A graph is \textbf{connected} if there is a path between every pair of vertices. A \textbf{cycle} is a sequence of vertices $(a^1,\ldots, a^k, a^{k+1})$ with $k > 2$ such that $(a^j, a^{j+1})\in E$ for all $1 \leq j \leq k$, $(a^1,\ldots, a^k)$ is a path, and $a^1 = a^{k+1}$. If $(a^i, a^j)\in E$, then $a^i$ and $a^j$ are called \textbf{end points} of this edge. The \textbf{degree} of a vertex is the number of edges for which that vertex is an end point. So, for every $a^i \in A$, we have $deg(a^i)=|\{a^j\in A : (a^i, a^j)\in E\}|$. A vertex $a^l$ belongs to a cycle $(a^1,\ldots, a^k, a^{k+1})$ if $a^l\in \{a^1,\ldots, a^k\}$. Similarly,  a vertex $a^l$ belongs to a path $(a^1,\ldots, a^k)$ if $a^l\in \{a^1,\ldots, a^k\}$. Two paths (cycles) are distinct if no vertex belongs to both paths (cycles). The \textbf{length} of a path (cycle) is the number of edges in a path (cycle).

For any $\mathbb{D}\subseteq \mathbb{P}$, we denote the undirected graph induced by $\mathbb{D}$, $G(\mathbb{D})=(A,E)$ as follows: 
 
\begin{itemize}
\item The set of nodes or vertices is the set of alternatives $A$. 
\item An edge $(a,b)\in E$ if and only if there exist $P_i,P_i'\in \mathbb{D}$ such that $P_i\sim P_i'$, $a=r_1(P_i)=r_2(P_i')$ and $b=r_1(P_i')=r_2(P_i)$. 
	\end{itemize}
	
The following lemma introduces some properties of $G(\mathbb{D})$ when $\mathbb{D}$ is minimally rich and connected with distinct neighbours.

\begin{lemma}\label{L1} Let $\mathbb{D}$ be minimally rich and connected with distinct neighbours. Then, $G(\mathbb{D})$ satisfies the following properties.
\begin{enumerate}
\item[(a)] $G(\mathbb{D})$ is connected.
\item[(b)] For every $a\in A$, $deg(a)\geq 2$.
\item[(c)] $G(\mathbb{D})$ contains a cycle.
\item[(d)] For every $a\in A$, if $a$ does not belong to a cycle in $G(\mathbb{D})$, then there exists a path  $(a^1,\ldots,a^k)$, $k\geq 3$, such that $a\in \{a^2,\ldots,a^{k-1}\}$, and $a^1$ and $a^k$ belong to distinct cycles.
\end{enumerate}
\end{lemma}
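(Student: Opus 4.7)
The plan is to establish parts (a)--(d) sequentially, invoking minimal richness and the two distinct neighbours property exactly where needed and otherwise relying on elementary graph-theoretic arguments.

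For part (a), I lift paths in $\mathbb{D}$ to walks in $G(\mathbb{D})$. Given $a, b \in A$, minimal richness supplies $P_i, P'_i \in \mathbb{D}$ with tops $a$ and $b$, and connectedness of $\mathbb{D}$ yields an adjacency sequence between them. Each elementary swap either fixes the top (if it occurs at ranks $\geq 2$) or exchanges the top with the second-ranked alternative; in the latter case, the definition of $E$ places an edge between the two consecutive tops. Collapsing repeats along the sequence therefore yields a path from $a$ to $b$ in $G(\mathbb{D})$. For part (b), fix $a \in A$ and any $P_i \in \mathbb{D}$ with $r_1(P_i)=a$. The two distinct neighbours property applied to $\mathbb{D}^{TCC}(P_i)$ produces $P'_i, P''_i \in N(\mathbb{D}^{TCC}(P_i),\mathbb{D})$ with $r_1(P'_i) \neq r_1(P''_i)$; each is adjacent to some ordering in $\mathbb{D}^{TCC}(P_i)$, all of which have top $a$, and neither has top $a$ itself (else it would be top-connected to $P_i$ and hence lie in $\mathbb{D}^{TCC}(P_i)$). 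Thus both adjacencies swap the top with the second alternative and deliver two distinct edges at $a$, giving $deg(a)\geq 2$. Part (c) is then immediate: any connected graph with minimum degree at least $2$ contains a cycle, via the standard longest-path argument.

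The substantive step is (d). Fix $a \in A$ lying on no cycle of $G(\mathbb{D})$. I first argue that the neighbours of $a$, of which there are at least two by (b), lie in pairwise distinct components of $G(\mathbb{D}) - a$: any path in $G(\mathbb{D})-a$ between two of them would close through $a$ into a cycle containing $a$. Pick two such components $C_1, C_2$ with corresponding neighbours $b_1 \in C_1$ and $b_2 \in C_2$ of $a$. The key sub-step is that each $C_i$ contains a cycle. Inside $C_i$, every vertex other than $b_i$ retains its full $G(\mathbb{D})$-degree (at least $2$ by (b)), because $a$ is its only possible neighbour outside $C_i$ and $a$ is in fact adjacent only to $b_i$ among vertices of $C_i$; meanwhile $b_i$ has residual degree $\geq deg(b_i)-1 \geq 1$. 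A handshake count (the sum of degrees in a spanning tree of $C_i$ would be $2(|C_i|-1)$, whereas the above bounds force at least $2|C_i|-1$) rules out $C_i$ being a tree, so $C_i$ contains a cycle $\gamma_i$. Choosing $a^1 \in \gamma_1$ and $a^k \in \gamma_2$ and concatenating a path inside $C_1$ from $a^1$ to $b_1$, the edges $(b_1,a)$ and $(a,b_2)$, and a path inside $C_2$ from $b_2$ to $a^k$ produces, using $C_1 \cap C_2 = \emptyset$ and $a \notin C_1 \cup C_2$, a simple path of length at least $2$ with $a$ strictly interior and endpoints on the vertex-disjoint, hence distinct, cycles $\gamma_1$ and $\gamma_2$.

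The main obstacle I anticipate is the cycle-in-component sub-step of (d): the rest reduces to lifting or bookkeeping, but there one must check carefully that the degree bounds from (b) still suffice, after deleting $a$, to force a cycle in each of the relevant components $C_1, C_2$ --- the inequality comes out with exactly one unit of slack, which is what makes the argument go through.
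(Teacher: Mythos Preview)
Your proofs of (a)--(c) are correct and essentially identical to the paper's: lift a $\mathbb{D}$-path to a walk in $G(\mathbb{D})$ by tracking tops, read off two edges at $a$ from the two distinct neighbours of $\mathbb{D}^{TCC}(P_i)$, and combine connectedness with minimum degree $2$ to force a cycle.

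For (d) your argument is also correct but takes a genuinely different route. The paper argues directly with a longest path through $a$: since $\deg(a^1)\geq 2$, some back-edge from the left endpoint $a^1$ hits the path at an index $i$, and since $a$ lies on no cycle this index must satisfy $i<j$, producing a cycle on $\{a^1,\ldots,a^i\}$; symmetrically a back-edge from $a^k$ gives a cycle on $\{a^{i'},\ldots,a^k\}$ with $i'>j$, and the sub-path $(a^i,\ldots,a^{i'})$ does the job. You instead delete $a$ and analyse the components of $G(\mathbb{D})-a$: the key structural observation that all neighbours of $a$ land in pairwise distinct components (else a cycle through $a$) immediately forces at most one degree drop per component, and your handshake inequality $2|C_i|-1>2(|C_i|-1)$ then rules out each component being a tree. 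The paper's approach is more direct and avoids any counting, but requires tracking indices carefully; your approach is more structural and makes the vertex-disjointness of the two cycles transparent (they live in different components), at the cost of the edge-count sub-argument. Both rely on exactly the same input, namely (a) and (b), so neither is more general.
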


\begin{proof} Let $\mathbb{D}$ be minimally rich and connected with distinct neighbours. 
\begin{enumerate}
\item[(a)] We show that $G(\mathbb{D})$ is connected. Consider a pair of of vertices $a$ and $b$ in $G(\mathbb{D})$. We will show that there is a path between $a$ and $b$. Since $\mathbb{D}$ is minimally rich, there exist $P_i, P'_i\in \mathbb{D}$ such that $r_1(P_i)=a$ and $r_1(P'_i)=b$. Since $\mathbb{D}$ is connected, there exists a path from $P_i$ to $P'_i$ in $\mathbb{D}$. Let $(P_i=P^1_i,P^2_i,\ldots,P^l_i=P'_i)$ be a path from $P_i$ to $P'_i$ in $\mathbb{D}$.   Following steps will ensure the existence of a path from $a$ to $b$ in $G(\mathbb{D})$. 
\begin{itemize}
\item[Step $1$:] We denote \[j^{1} = \max_{j\in \{1,\ldots,l\}} \{j: r_1(P^j_i)=a\},\]

If $r_1(P^{j^{1}+1}_i)=b$, then $(a,b)$ is a path from $a$ to $b$ and  we are done. Otherwise, we move to step $2$.

\item[Step $2$:] We denote \[j^{2} = \max_{j\in \{j^{1}+1,\ldots,l\}} \{j: r_1(P^j_i)=r_1(P^{j^{1}+1}_i)\},\]

If $r_1(P^{j^{2}+1}_i)=b$, then $(a,r_1(P^{j^{2}}_i),b)$ is a path from $a$ to $b$ and  we are done. Otherwise, we move to step $3$.

\item[\vdots]

\item[Step $t$:] We denote \[j^{t} = \max_{j\in \{j^{t-1}+1,\ldots,l\}} \{j: r_1(P^j_i)=r_1(P^{j^{t-1}+1}_i)\},\]

If $r_1(P^{j^{t}+1}_i)=b$, then $(a,r_1(P^{j^{2}}_i),\ldots,r_1(P^{j^{t}}_i),b)$ is a path from $a$ to $b$ and  we are done. Otherwise, we move to step $t+1$. 
\end{itemize} 

Since $r_1(P^l_i)=b$, in fewer than $l$ steps, we will find a path from $a$ to $b$.

\item[(b)] Consider any $a\in A$. We will show that $deg(a)\geq 2$. Since $\mathbb{D}$ is minimally rich, there exists $P_i\in \mathbb{D}$ such that $r_1(P_i)=a$. Since $\mathbb{D}$ is connected with distinct neighbours, there exist $P'_i, P''_i\in N(\mathbb{D}^{TC}(P_i),\mathbb{D})$  such that $r_1(P'_i)\neq r_1(P''_i)$. Since $(a,r_1(P'_i))$ and $(a,r_1(P''_i))$ are two distinct edges in $G(\mathbb{D})$, we have that $deg(a)\geq 2$.

\item[(c)] We assume, for contradiction, that $G(\mathbb{D})$ does not contain a cycle. Since $G(\mathbb{D})$ is connected (by Lemma \ref{L1}(a)), it is well-known that $G(\mathbb{D})$ is a tree graph and there exists a vertex with degree $1$. This contradicts Lemma \ref{L1}(b). Therefore, $G(\mathbb{D})$ contains a cycle.

\item[(d)] Suppose $a \in A$ does not belong to a cycle in $G(\mathbb{D})$. Let $P(a)$ be the set of all paths that contain $a$. Note that $P(a)$ is finite. Let $(a^1, \ldots, a^k)$ be a path that has the maximum length among all paths in $P(a)$. We assume that $a = a^j$, where $j \in \{1, \ldots, k\}$. First, we show that $(a^1, \ldots, a^i, a^1)$ is a cycle in $G(\mathbb{D})$ where $2 < i < j$. Since $\text{deg}(a^1) \geq 2$, there exists $b \in A$ such that $b \neq a^2$ and $(a^1, b)$ is an edge. Since $(a^1, \ldots, a^k)$ has the maximum length, we have that $b \in \{a^3, \ldots, a^k\}$. Without loss of generality, we assume that $b = a^i$. Therefore, $(a^1, \ldots, a^i, a^1)$ is a cycle in $G(\mathbb{D})$. Moreover, since $a^j$ does not belong to a cycle, it must be the case that $i < j$. A similar argument will establish that $(a^k, a^{k-1}, \ldots, a^{i'}, a^k)$ is a cycle in $G(\mathbb{D})$ where $j < i' < k - 1$. Therefore,  $(a^i,a^{i+1}, \ldots, a^{i'-1},a^{i'})$ is a path where $a\in \{a^{i+1}, \ldots, a^{i'-1}\}$. Also, cycles $(a^1, \ldots, a^i, a^1)$ and $(a^k, a^{k-1}, \ldots, a^{i'}, a^k)$ are distinct. 
\end{enumerate}
\end{proof}

\subsubsection*{2.2 The Proof of Theorem \ref{thm:2}}

\begin{prop}\label{P1} Let $\mathbb{D}$ be a minimally rich and connected with distinct neighbours domain. If $f: \mathbb{D}^n\rightarrow A$ is unanimous, tops-only and local strategy-proof, then $f$ satisfies dictatorship.
\end{prop}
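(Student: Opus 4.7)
My plan is to proceed by induction on the number of voters $n$, with the base case $n=2$ being the heart of the argument. For the base case, since $f$ is tops-only, I represent it as a function $g:A\times A\to A$ via $g(a,b)=f(P_1,P_2)$ where $r_1(P_1)=a$ and $r_1(P_2)=b$; this is well-defined by tops-onlyness and minimally rich by hypothesis. Local strategy-proofness combined with tops-onlyness translates into the following edge constraints on the graph $G(\mathbb{D})$: for every $(a,b)\in E(G(\mathbb{D}))$ and every $c\in A$, either $g(a,c)=g(b,c)$, or $g(a,c)=a$ and $g(b,c)=b$, and symmetrically for voter~2 (call these (V1) and (V2)). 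Setting $c\in\{a,b\}$ and using unanimity yields that for each edge $(a,b)$ the pair $(g(a,b),g(b,a))$ lies in $\{(a,a),(a,b),(b,a),(b,b)\}$: voter~1 dictates the edge, voter~2 dictates, or $a$ (resp.\ $b$) is a constant winner.

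The key combinatorial step is to show that along any cycle $(a_1,\ldots,a_k,a_1)$ in $G(\mathbb{D})$ (which exists by Lemma \ref{L1}(c)), (i) no edge can be a constant winner, and (ii) all edges share the same dictator. For (i), assume for contradiction that $(a_1,a_2)$ has $a_1$ as constant winner. Propagating (V2) with voter~1 top fixed at $a_1$ around the cycle forces $g(a_1,a_j)=a_1$ for every $j$; propagating again with voter~1 top fixed at $a_2$, using (V1) on the edge $(a_1,a_2)$ at each voter~2 top $a_j$, forces $g(a_2,a_j)\in\{a_1,a_2\}$, and then (V2) around the cycle with voter~1 top at $a_2$ pushes $g(a_2,a_j)=a_1$ for all $j\geq 3$. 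But then the direct edge constraint on $(a_2,a_3)$ requires $g(a_2,a_3)\in\{a_2,a_3\}$, contradiction. Part (ii) follows by the analogous propagation that rules out an abrupt change of dictator along adjacent cycle edges.

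By Lemma \ref{L1}(c) choose a cycle $C_0$; without loss of generality voter~1 dictates on $C_0$, and propagation within $C_0$ yields $g(u,v)=u$ for all $u,v\in V(C_0)$. To extend to all of $A$, consider any $z\notin V(C_0)$. Using Lemma \ref{L1}(b), (d), either $z$ lies in another cycle $C_z$, or $z$ lies on a path between two distinct cycles. Using connectedness of $G(\mathbb{D})$ (Lemma \ref{L1}(a)) together with the earlier cycle argument applied to cycles sharing a vertex, one shows that any cycle containing $z$ must also be voter~1 dictatorial; for the path-between-cycles case, I bootstrap from the two endpoint cycles inward using (V1) and (V2). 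At each step the fact that $X:=\{u\in V(C_0):g(z,u)=u\}$ must satisfy $|X|\in\{0,1,|V(C_0)|\}$ (by the same cycle-propagation trick) forces $g(z,u)=z$ for every $u\in V(C_0)$, and symmetrically $g(u,z)=u$. Iterating gives $g(a,b)=a$ for all $a,b\in A$, completing the base case.

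For the inductive step from $n-1$ to $n$, I plan to adapt the grouping reduction of \cite{Aswal03}: view the $n$-voter scf as a two-voter scf by partitioning $N$ into two groups (treating each group as a single ``super-voter'' whose preference is any common element of the group's profiles), apply the base case to obtain a dictatorial super-voter, and then refine within that group by re-partitioning. Tops-onlyness, unanimity, and local strategy-proofness are preserved by these restrictions with appropriate care about preference adjacency. The principal obstacle I expect is the extension step in the base case: propagating voter~1 dictatorship from $V(C_0)$ to the rest of $A$, where Lemma \ref{L1}(d) is the central tool since it guarantees that every non-cycle vertex is ``sandwiched'' between two cycles, giving the propagation backbone needed to transfer dictatorship globally.
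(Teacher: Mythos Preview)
Your scaffolding matches the paper's (induction on $n$; base case $n=2$ analysed through the graph $G(\mathbb{D})$), and your edge constraints (V1), (V2) together with the argument ruling out a constant-winner edge on a cycle are correct. But the base case has a genuine gap in the extension step, and the inductive step invokes a reduction that is not available in this form.

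\medskip
\noindent\textbf{The extension gap.} You establish $g(u,v)=u$ for $u,v\in V(C_0)$ and then claim that the trichotomy $|X|\in\{0,1,|V(C_0)|\}$ for $X=\{u\in V(C_0):g(z,u)=u\}$ ``forces $g(z,u)=z$''. The trichotomy is correct, but nothing you have written eliminates $|X|=|V(C_0)|$ (voter~2 wins against $z$ throughout $C_0$) or $|X|=1$ (a fixed $c_m\in V(C_0)$ is the constant outcome). You also rely on ``cycles sharing a vertex'' to transfer the dictator identity, yet two cycles in $G(\mathbb{D})$ may be joined only by a bridge path; your sketch does not explain how the dictator label propagates across such bridges. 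The paper avoids this problem by reversing the order of quantifiers: rather than fixing a cycle and pushing outward, it shows (Lemma~\ref{L3}, via Claims~\ref{C4}--\ref{C6}) that for \emph{every} edge $(a,b)$, the single value $g(a,b)$ already pins down \emph{global} decisiveness of one voter over $a$ or $b$. The cycle structure of Lemma~\ref{L1}(c),(d) is used inside Claim~\ref{C6} to reroute propagation so that from $g(a,b)=a$ one always reaches $g(a,c)=a$ for arbitrary $c$; that rerouting is precisely the argument your extension is missing.

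\medskip
\noindent\textbf{The inductive step.} The Aswal--Chatterji--Sen grouping reduction (Proposition~\ref{P3}) is proved for strategy-proofness, not local strategy-proofness, and the paper uses it only in Theorem~\ref{thm:3}. For Proposition~\ref{P1} the paper instead clones voters $1$ and $2$ into a single voter, checks that the resulting $(n-1)$-voter scf is locally strategy-proof via a one-step-at-a-time transitivity argument (Lemma~\ref{L6}), and then performs a nontrivial case split (Claims~\ref{C7}, \ref{C8}) to recover the individual dictator and to show that the dictator's identity does not depend on the other voters' fixed preferences. Your ``refine by re-partitioning'' does not address this last point, which is where the work lies.
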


\begin{proof}
Let $\mathbb{D}$ be a minimally rich and connected with distinct neighbours domain. We assume that $f: \mathbb{D}^n\rightarrow A$ satisfies unanimity, tops-onlyness and local strategy-proofness.  We show that $f$ satisfies dictatorship. 

For any $a\in A$, we say that voter $i$ is decisive over $a$ if for all $P\in \mathbb{D}^n$ such that $r_1(P_i)=a$, we have $f(P)=a$. Note that voter $i$ is a dictator if she is decisive over all $a\in A$.  

We prove the proposition using an induction argument on the number of voters. Let $N=\{1,2\}$. We show that $f: \mathbb{D}^2\rightarrow A$ satisfies dictatorship. Let $G(\mathbb{D})=(A,E)$ be the undirected graph induced by $\mathbb{D}$. 

\begin{claim}\label{C1} Let $(a,b)\in E$. Then, either
\begin{enumerate}
\item[(i)] for all $(P_1,P_2)\in \mathbb{D}^2$ such that $r_1(P_1)=a$ and $r_1(P_2)=b$, $f(P_1,P_2)=a$, or
\item[(ii)] for all $(P_1,P_2)\in \mathbb{D}^2$ such that $r_1(P_1)=a$ and $r_1(P_2)=b$, $f(P_1,P_2)=b$.
\end{enumerate}
\end{claim}

\begin{proof} Let $(\bar{P_1},\bar{P_2})\in \mathbb{D}^2$ such that $r_1(\bar{P_1})=r_2(\bar{P_2})=a$ and $r_1(\bar{P_2})=r_2(\bar{P_1})=b$ and $\bar{P_1}\sim \bar{P_2}$. Note that such a profile exists because $(a,b)\in E$. We show that $f(\bar{P_1},\bar{P_2})$ is either $a$ or $b$. Suppose not, i.e., $f(\bar{P_1},\bar{P_2})=c \neq a,b$. Then, voter $1$ can locally manipulate at $(\bar{P_1},\bar{P_2})$ via $\bar{P_2}$ and obtain the outcome $b$ by unanimity. Therefore, $f(\bar{P_1},\bar{P_2})$ is either $a$ or $b$.  

If $f(\bar{P_1},\bar{P_2})=a$, then by tops-onlyness, for all $(P_1,P_2)\in \mathbb{D}^2$ such that $r_1(P_1)=a$ and $r_1(P_2)=b$, we have $f(P_1,P_2)=a$. Similarly, if $f(\bar{P_1},\bar{P_2})=b$, then tops-onlyness implies that for all $(P_1,P_2)\in \mathbb{D}^2$ such that $r_1(P_1)=a$ and $r_1(P_2)=b$, we have $f(P_1,P_2)=b$.
\end{proof}

\begin{claim}\label{C2} Let $(a^1,\ldots,a^k)$ be a path in $G(\mathbb{D})$ with $k\geq 3$. Suppose, for all $(P_1,P_2)\in \mathbb{D}^2$ such that $r_1(P_1)=a^1$ and $r_1(P_2)=a^2$, we have $f(P_1,P_2)=a^1$. Then, it follows that for all $(P_1,P_2)\in \mathbb{D}^2$ such that $r_1(P_1)=a^{k-1}$ and $r_1(P_2)=a^k$, we have $f(P_1,P_2)=a^{k-1}$.
\end{claim}

\begin{proof} Let $(a^1,\ldots,a^k)$ be a path in $G(\mathbb{D})$ with $k\geq 3$. Suppose that for all $(P_1,P_2)\in \mathbb{D}^2$ such that $r_1(P_1)=a^1$ and $r_1(P_2)=a^2$, we have $f(P_1,P_2)=a^1$. We will prove, by induction on $l\in \{1,2,\ldots,k-1\}$, that for all $(P_1,P_2)\in \mathbb{D}^2$ such that $r_1(P_1)=a^l$ and $r_1(P_2)=a^{l+1}$, we have $f(P_1,P_2)=a^l$.

The base case i.e., $l=1$ is given by assumption. We assume that for some $l\in \{1,2,\ldots,k-2\}$, this is true. We will show it holds for $l+1$, i.e., for all $(P_1,P_2)\in \mathbb{D}^2$ such that $r_1(P_1)=a^{l+1}$, $r_1(P_2)=a^{l+2}$, we have $f(P_1,P_2)=a^{l+1}$.

Consider a preference profile $(P'_1,P'_2)$ and orderings $P''_1$ and $P''_2$ such that $(i)$ $r_1(P'_1)=a^{l}$, $r_1(P''_1)=a^{l+1}$ and $P'_1\sim P''_1$, and $(ii)$ $r_1(P'_2)=a^{l+1}$, $r_1(P''_2)=a^{l+2}$ and $P'_2\sim P''_2$. By our assumption, we have that $f(P'_1,P'_2)=a^{l}$. By local strategy-proofness, we have that $f(P'_1,P''_2)=a^{l}$. Again by local strategy-proofness, $f(P''_1,P''_2)\in \{a^l,a^{l+1}\}$. Since $(a^{l+1},a^{l+2})\in E$, by claim \ref{C1}, we have that $f(P''_1,P''_2)=a^{l+1}$ and  for all $(P_1,P_2)\in \mathbb{D}^2$ such that $r_1(P_1)=a^{l+1}$ and $r_1(P_2)=a^{l+2}$, $f(P_1,P_2)=a^{l+1}$.

Therefore, it follows that for all $(P_1,P_2)\in \mathbb{D}^2$ such that $r_1(P_1)=a^{k-1}$ and $r_1(P_2)=a^k$, we have $f(P_1,P_2)=a^{k-1}$.
\end{proof}

\begin{claim}\label{C3} Let $(a^1,\ldots,a^k)$ be a path in $G(\mathbb{D})$ with $k\geq 3$. Suppose, for all $(P_1,P_2)\in \mathbb{D}^2$ such that $r_1(P_1)=a^1$ and $r_1(P_2)=a^2$, we have $f(P_1,P_2)=a^1$. Then, it follows that for all $(P_1,P_2)\in \mathbb{D}^2$ such that $r_1(P_1)=a^1$ and $r_1(P_2)=a^k$, we have $f(P_1,P_2)=a^1$. 
\end{claim}

\begin{proof} Let $(a^1,\ldots,a^k)$ be a path in $G(\mathbb{D})$. We assume that for all $(P_1,P_2)\in \mathbb{D}^2$ such that $r_1(P_1)=a^1$ and $r_1(P_2)=a^2$, we have $f(P_1,P_2)=a^1$. We will prove, by induction on $l\in \{2,\ldots,k\}$, that for all $(P_1,P_2)\in \mathbb{D}^2$ such that $r_1(P_1)=a^1$ and $r_1(P_2)=a^{l}$, we have $f(P_1,P_2)=a^1$.

The base case i.e., $l=2$ is given by assumption. We assume that for some $l\in \{2,\ldots,k-1\}$, this is true. We will show it holds for $l+1$, i.e., for all $(P_1,P_2)\in \mathbb{D}^2$ such that $r_1(P_1)=a^{1}$, $r_1(P_2)=a^{l+1}$, we have $f(P_1,P_2)=a^{1}$.

Consider a preference profile $(P'_1,P'_2)$ and an ordering $P''_2$ such that $(i)$ $r_1(P'_1)=a^1$ and $(ii)$ $r_1(P'_2)=a^{l}$, $r_1(P''_2)=a^{l+1}$ and $P'_2\sim P''_2$. By our assumption, we have that $f(P'_1,P'_2)=a^1$. By local strategy-proofness, we have that $f(P'_1,P''_2)=a^1$. By tops-onlyness, we conclude that for all $(P_1,P_2)\in \mathbb{D}^2$ such that $r_1(P_1)=a^1$ and $r_1(P_2)=a^{l+1}$, $f(P_1,P_2)=a^1$.

Therefore, it follows that for all $(P_1,P_2)\in \mathbb{D}^2$ such that $r_1(P_1)=a^1$ and $r_1(P_2)=a^k$, we have $f(P_1,P_2)=a^1$. 
\end{proof}

We show the following lemmas.

\begin{lemma}\label{L2} For any distinct $a$ and $b$ in $A$, it is impossible that voter $1$ is decisive over $a$ and voter $2$ is decisive over $b$.
\end{lemma}
\begin{proof}
Let $(P_1,P_2)\in \mathbb{D}^2$ such that $r_1(P_1)=a$ and $r_1(P_2)=b$. Since $f$ is a function, it is impossible for 
$f(P_1,P_2)$ to equal both $a$ and $b$ simultaneously. Therefore, it is impossible that voter $1$ is decisive over $a$ and voter $2$ is decisive over $b$.
\end{proof}

\begin{lemma}\label{L3} Let $(a,b)\in E$. Then, either voter $1$ is decisive over $a$ or voter $2$ is decisive over $b$. 
\end{lemma}
\begin{proof}
Fix $(a,b)\in E$. Let $(\bar{P_1},\bar{P_2})\in \mathbb{D}^2$ such that $r_1(\bar{P_1})=a$ and $r_1(\bar{P_2})=b$. By Claim \ref{C1}, $f(\bar{P_1},\bar{P_2})$ is either $a$ or $b$. We will show that if $f(\bar{P_1},\bar{P_2})=a$, then voter $1$ is decisive over $a$. A similar argument will establish that if $f(\bar{P_1},\bar{P_2})=b$, then voter $2$ is decisive over $b$. W.l.o.g., we assume that $f(\bar{P_1},\bar{P_2})=a$. We will show that for all $(P_1,P_2)\in \mathbb{D}^2$ such that $r_1(P_1)=a$, we have $f(P_1,P_2)=a$. Following three claims will establish the lemma.

\begin{claim}\label{C4} For all $(P_1,P_2)\in \mathbb{D}^2$ such that $r_1(P_1)=r_1(P_2)=a$, we have $f(P_1,P_2)=a$.
\end{claim}
\begin{proof}
This follows from the fact the $f$ is unanimous.
\end{proof}

\begin{claim}\label{C5} For all $(P_1,P_2)\in \mathbb{D}^2$ such that $r_1(P_1)=a$ and $r_1(P_2)=b$, we have $f(P_1,P_2)=a$.
\end{claim}
\begin{proof}
Note that $f(\bar{P_1},\bar{P_2})=a$. By claim \ref{C1}, for all $(P_1,P_2)\in \mathbb{D}^2$ such that $r_1(P_1)=a$ and $r_1(P_2)=b$, we have $f(P_1,P_2)=a$.
\end{proof}

\begin{claim}\label{C6} For all $(P_1,P_2)\in \mathbb{D}^2$ such that $r_1(P_1)=a$, we have $f(P_1,P_2)=a$.
\end{claim}
\begin{proof}
Consider a profile $(P_1,P_2)\in \mathbb{D}^2$ such that $r_1(P_1)=a$ and $r_1(P_2)=c$. If $c=a$, we are done by Claim \ref{C4}. If $c=b$, we are done by Claim \ref{C5}. Therefore, we consider the case where $c\neq a,b$.

Suppose there exists a path from $a$ to $c$ in $G(\mathbb{D})$ that contains $b$. Since $(a,b)\in E$, there exists a path $(a^1 = a, a^2, \ldots, a^k = c)$ in $G(\mathbb{D})$ such that $a^2 = b$. Then, by Claim \ref{C3} and Claim \ref{C5}, it follows that $f(P_1, P_2) = a$.

We only need to consider the case that all paths from $a$ to $c$ in $G(\mathbb{D})$ do not contain $b$. Let $(a^1=a,a^2,\ldots,a^k=c)$ be a path that does not contain $b$. We will consider following cases to complete the proof.

Case 1: $b$ does not belong to a cycle in $G(\mathbb{D})$. By Lemma \ref{L1}.(d), there exist a path $(b^1,b^2,\ldots,b^p)$ and a cycle $(c^1,c^2,\ldots,c^k,c^{k+1})$ in $G(\mathbb{D})$ such that $b^1=b$ and $b^p=c^1$. Now consider the path $(a^1=a,b^1=b,\ldots,b^p=c^1,c^2,\ldots,c^{k-1},c^k)$. By claim \ref{C2} and claim \ref{C5}, for all $(P_1,P_2)\in \mathbb{D}^2$ such that $r_1(P_1)=c^{k-1}$, $r_1(P_2)=c^k$, we have $f(P_1,P_2)=c^{k-1}$.

 Next we consider the path $(c^{k-1},c^k,c^1=b^p,b^{p-1},\ldots,b^2,b^1=b,a^1=a,a^2)$. By the result from the previous paragraph and Claim \ref{C2}, we conclude that for all $(P_1,P_2)\in \mathbb{D}^2$ such that $r_1(P_1)=a$, $r_1(P_2)=a^2$, we have $f(P_1,P_2)=a$.  
 
 Finally consider the path $(a^1=a,a^2,\ldots,a^k=c)$. By the result from the previous paragraph and Claim \ref{C3}, we conclude that for all $(P_1,P_2)\in \mathbb{D}^2$ such that $r_1(P_1)=a$, $r_1(P_2)=c$, we have $f(P_1,P_2)=a$.

Case 2: $b$ belongs to a cycle in $G(\mathbb{D})$. Let $(c^1,c^2,\ldots,c^k,c^{k+1})$ be the cycle and $c^1=b$. we consider following two sub cases.

Sub-case 2.1: $a$ belongs to the cycle $(c^1,c^2,\ldots,c^k,c^{k+1})$. W.l.o.g., we assume that $c^k=a$. Now consider the path $(c^k,c^1,c^2,\ldots,c^{k-2},c^{k-1})$. By claim \ref{C2} and claim \ref{C5}, for all $(P_1,P_2)\in \mathbb{D}^2$ such that $r_1(P_1)=c^{k-2}$, $r_1(P_2)=c^{k-1}$, we have $f(P_1,P_2)=c^{k-2}$.

Next, consider the path $(c^{k-2},c^{k-1},c^k=a^1=a,a^2)$. By the result from the previous paragraph and Claim \ref{C2}, we have that for all $(P_1,P_2)\in \mathbb{D}^2$ such that $r_1(P_1)=a$, $r_1(P_2)=a^2$, $f(P_1,P_2)=a$.

Finally we consider the path $(a^1=a,a^2,\ldots,a^k=c)$.  By the result from the previous paragraph and Claim \ref{C3}, we have that for all $(P_1,P_2)\in \mathbb{D}^2$ such that $r_1(P_1)=a$, $r_1(P_2)=c$, $f(P_1,P_2)=a$.

Sub-case 2.2: $a$ does not belong to the cycle $(c^1,c^2,\ldots,c^k,c^{k+1})$. Now consider the path $(a^1=a,c^1=b,c^2,\ldots,c^{k-1},c^k)$. By claim \ref{C5} and claim \ref{C2}, for all $(P_1,P_2)\in \mathbb{D}^2$ such that $r_1(P_1)=c^{k-1}$, $r_1(P_2)=c^k$, we have $f(P_1,P_2)=c^{k-1}$.

Next, consider the path $(c^{k-1},c^k,c^1,a^1=a,a^2)$. By the result from the previous paragraph and Claim \ref{C2}, we have that for all $(P_1,P_2)\in \mathbb{D}^2$ such that $r_1(P_1)=a$, $r_1(P_2)=a^2$, $f(P_1,P_2)=a$.

Finally we consider the path $(a^1=a,a^2,\ldots,a^k=c)$.  By the result from the previous paragraph and Claim \ref{C3}, we have that for all $(P_1,P_2)\in \mathbb{D}^2$ such that $r_1(P_1)=a$, $r_1(P_2)=c$, $f(P_1,P_2)=a$.

This concludes the proof of the claim.
\end{proof}

Claim \ref{C6} shows that voter $1$ is decisive over $a$ if $f(\bar{P_1},\bar{P_2})=a$. Similarly, if $f(\bar{P_1},\bar{P_2})=b$, a similar argument will establish that voter $2$ is decisive over $b$. This concludes the proof of the lemma.  
\end{proof}

\begin{lemma}\label{L4} For any $a\in A$, either voter $1$ or voter $2$ is decisive over $a$.
\end{lemma}
\begin{proof} Suppose not, i.e., both voter $1$ and voter $2$ are not decisive over $a$. Note that by Lemma \ref{L1}.(b), $deg(a)\geq 2$ in $G(\mathbb{D})$. Therefore, we have two distinct alternatives $b$ and $c$ such that $(b,a)\in E$ and $(a,c)\in E$. Since $(b,a)\in E$, by lemma \ref{L3} and the fact that voter $2$ is not decisive over $a$, we have that voter $1$ is decisive over $b$. Similarly, since $(a,c)\in E$, by lemma \ref{L3} and the fact that voter $1$ is not decisive over $a$, we have that voter $2$ is decisive over $c$. Since voter $1$ is decisive over $b$ and voter $2$ is decisive over $c$, we contradict Lemma \ref{L2}. Therefore, for any $a\in A$, either voter $1$ or voter $2$ is decisive over $a$.   
\end{proof}

\begin{lemma}\label{L5} Either voter $1$ is decisive over all alternatives or voter $2$ is decisive over all alternatives.
\end{lemma}

\begin{proof} Let $a\in A$. By Lemma \ref{L4}, either voter $1$ is decisive over $a$ or voter $2$ is decisive over $a$. First, assume that voter $1$ is decisive over $a$. Let $x\in A\setminus a$. By Lemma \ref{L4}, either voter $1$ is decisive over $x$ or voter $2$ is decisive over $x$. By Lemma \ref{L2}, voter $2$ cannot be decisive over $x$. Therefore, voter $1$ is decisive over $x$. Because $x$ was arbitrary, voter $1$ is decisive over all alternatives. Similarly, if we assume that voter $2$ is decisive over $a$, then voter $2$ is decisive over all alternatives. 
\end{proof}

Therefore, we conclude that $f: \mathbb{D}^2\rightarrow A$ satisfies dictatorship.

Now we are ready to prove the proposition for $N=\{1,2,\ldots,n\}$, $n\geq 3$. We will use induction arguments. Assume that for all integers $k < n $, the following statement is true:

Induction Hypothesis (IH): If $ f : \mathbb{D}^k \rightarrow A$ satisfies unanimity, tops-onlyness and local strategy-proofness, then it satisfies dictatorship. 
	
	Define a scf $g: \mathbb{D}^{n-1} \rightarrow A$ as follows: For all $(P_{1}, P_3,\ldots,P_n) \in \mathbb{D}^{n-1}$,
	
		$$g(P_{1},P_3,\ldots,P_n ) = f(P_1, P_1, P_3,\ldots,P_n).$$
		
Voter $1$ in the scf $g$ is obtained by ``cloning" voters $1$ and $2$ in $f$. Thus if voters
$1$ and $2$ in $f$ have a common ordering $P_1$, then voter $1$ in $g$ has ordering $P_1$. 

\begin{lemma}\label{L6}
The scf $g$ satisfies dictatorship.
\end{lemma}
\begin{proof}
Since $f$ satisfies unanimity and tops-onlyness, it is straightforward that $g$ satisfies unanimity and tops-onlyness. We show that $g$ is locally strategy-proof. It is clear that voters 3 through n cannot locally manipulate in $g$, otherwise they can locally manipulate in $f$. Pick an arbitrary $n-1$ voters profile $(P_1,P_3,\ldots,P_n)$ and let $g(P_1,P_3,\ldots,P_n)=f(P_1,P_1,P_3,\ldots,P_n)=a$. We consider an arbitrary ordering $P'_1$ such that $P'_1\sim P_1$. Let $f(P'_1,P_1,P_3,\ldots,P_n)=b$ and let $f(P'_1,P'_1,P_3,\ldots,P_n)=g(P'_1,P_3,\ldots,P_n)=c$. Since $f$ is locally strategy-proof $a\neq b$ implies $a\;P_1\;b$ and $b\neq c$ implies $b\;P_1\;c$. Since $P_1$ is transitive, $a\neq c$ implies $aP_1 c$. Therefore $g$ cannot be locally manipulated by voter $1$. By IH, we conclude that $g$ satisfies dictatorship.
\end{proof}

Now, we are ready to show that $f: \mathbb{D}^n \rightarrow A$ satisfies dictatorship. First, we will prove the following two claims.  

\begin{claim}\label{C7} Let $(a^1,\ldots,a^k)$ be a path in $G(\mathbb{D})$ with $k\geq 2$. Consider a preference profile $P\in \mathbb{D}^n$ such that $f(P)=r_1(P_i)$ and $r_1(P_j)=a^1$ for some $i,j\in N$ with $i \neq j$. Suppose that $r_1(P_i) \notin \{a^1, \ldots, a^k\}$. Then, for all $P'_j\in \mathbb{D}$ such that $r_1(P'_j)=a^k$, we have $f(P'_j,P_{-j})=r_1(P_i)$.
\end{claim}

\begin{proof} Let $(a^1,\ldots,a^k)$ be a path in $G(\mathbb{D})$. Let $P\in \mathbb{D}^n$ such that $f(P)=r_1(P_i)=a$, $r_1(P_j)=a^1$ and $a\notin \{a^1, \ldots, a^k\}$. Consider a sequence of preference ordering $(P^1_j,P^2_j\ldots,P^k_j)$ such that for all $l\in \{1,2,\ldots,k\}$, $r_1(P^l_j)=a^l$. We will use induction over $l$ to show that $f(P^k_j,P_{-j})=a$.

Suppose $l=1$. Since $f(P)=a$, by tops-onlyness, $f(P^1_j,P_{-j})=a$. We assume that for some $l\in \{1,2,\ldots,k-1\}$, $f(P^l_j,P_{-j})=a$. Now we show that $f(P^{l+1}_j,P_{-j})=a$. Note that $(a^{l},a^{l+1})\in E$. Therefore, we have $P'_j,P''_j\in \mathbb{D}$ such that $r_1(P'_j)=a^{l}$, $r_1(P''_j)=a^{l+1}$ and $P'_j\sim P''_j$. By our assumption, $f(P^{l}_j,P_{-j})=a$. By tops-onlyness, $f(P'_j,P_{-j})=f(P^{l}_j,P_{-j})=a$. By local strategy-proofness, $f(P''_j,P_{-j})=f(P'_j,P_{-j})=a$. Again applying tops-onlyness, we have $f(P^{l+1}_j,P_{-j})=f(P''_j,P_{-j})=a$. Therefore, by induction, we have  $f(P^k_j,P_{-j})=a$.

Finally by tops-onlyness, it follows that for all $P'_j\in \mathbb{D}$ such that $r_1(P'_j)=a^k$, we have $f(P'_j,P_{-j})=a$. This concludes the proof of the claim.
\end{proof}

\begin{claim}\label{C8} Let $(a^1,\ldots,a^k)$ be a path in $G(\mathbb{D})$ with $k\geq 2$. Consider a profile $P\in \mathbb{D}^n$ such that $f(P)=r_1(P_i)=a^k$ and $r_1(P_j)=a^1$ for some $i,j\in N$ with $i \neq j$. Then, for all $P'_j\in \mathbb{D}$ such that $r_1(P'_j)=a^k$, we have $f(P'_j,P_{-j})=r_1(P_i)$.
\end{claim}

\begin{proof} Let $(a^1,\ldots,a^k)$ be a path in $G(\mathbb{D})$. Let $P\in \mathbb{D}^n$ such that $f(P)=r_1(P_i)=a^k$, $r_1(P_j)=a^1$.  Consider the path $(a^1,\ldots,a^{k-1})$ and  orderings $P^*_j$ and $P^{**}_j$ such that $r_1(P^*_j)=a^{k-1}$, $r_1(P^{**}_j)=a^{k}$ and $P^*_j\sim P^{**}_j$. By Claim \ref{C7}, we have that $f(P^*_j,P_{-j})=r_1(P_i)$. By local strategy-proofness, we have that $f(P^{**}_j,P_{-j})=f(P^*_j,P_{-j})=r_1(P_i)=a^k$. Finally applying tops-onlyness, we conclude that for all $P'_j\in \mathbb{D}$ such that $r_1(P'_j)=a^k$, we have $f(P'_j,P_{-j})=r_1(P_i)=a^k$.
\end{proof}

We consider following two cases to show that $f$ satisfies dictatorship.

Case 1: Voter $j\in\{3,4,\ldots,n\}$ is the dictator in $g$. We will show that $j$ is the dictator in $f$. Consider an arbitrary profile $(P_1,P_2,\ldots,P_n)$. Let $r_1(P_j)=a$. We will show that $f(P_1,P_2,\ldots,P_n)=a$. If $r_1(P_1)=r_1(P_2)$, then by tops-onlyness, $f(P_1,P_2,P_3,\ldots,P_n)=f(P_1,P_1,P_3\ldots,P_n)=g(P_1,P_3,\ldots,P_n)=a$. We assume that $r_1(P_1)=b\neq c=r_1(P_2)$. Let $(a^1=b,a^2,\ldots,a^k=c)$ be a path from $b$ to $c$ in $G(\mathbb{D})$. We consider following two sub-cases.

Sub-case 1.1: $(a^1=b,a^2,\ldots,a^k=c)$ does not contain $a$. Since $f(P_1,P_1,P_3\ldots,P_n)=g(P_1,P_3,\ldots,P_n)=a$, by Claim \ref{C7}, we have that $f(P_1,P_2,\ldots,P_n)=a$. 

Sub-case 1.2: $(a^1=b,a^2,\ldots,a^k=c)$ contains $a$. Suppose $a=c$. Consider the path $(a^1=b,a^2,\ldots,a^k=c)$. Since $f(P_1,P_1,P_3\ldots,P_n)=g(P_1,P_3,\ldots,P_n)=a$, by Claim \ref{C8}, we have that $f(P_1,P_2,\ldots,P_n)=a$. 

Suppose $a=b$. Consider the path $(a^k=c,a^{k-1},\ldots,a^1=b)$. Since $f(P_2,P_2,P_3\ldots,P_n)=g(P_2,P_3,\ldots,P_n)=a$, by Claim \ref{C8}, we have that $f(P_1,P_2,\ldots,P_n)=a$. 

Suppose $a=a^l$ where $1<l<k$. Consider the path $(a^1=b,a^2,\ldots,a^l)$ and orderings $P'_2$ and $P''_2$ auch that $r_1(P'_2)=a^l=a$, $r_1(P''_2)=a^{l+1}$ and $P'_2\sim P''_2$.  Since $f(P_1,P_1,P_3\ldots,P_n)=g(P_1,P_3,\ldots,P_n)=a$, by Claim \ref{C8}, we have that $f(P_1,P'_2,\ldots,P_n)=a$. Next we show that $f(P_1,P''_2,\ldots,P_n)=a$. By local strategy-proofness, $f(P_1,P''_2,\ldots,P_n)\in \{a,a^{l+1}\}$. Suppose that $f(P_1,P''_2,\ldots,P_n)=a^{l+1}$. Consider the path $(a^1=b,a^2,\ldots,a^l,a^{l+1})$. Since $f(P_1,P''_2,\ldots,P_n)=a^{l+1}$,  by Claim \ref{C8}, we have that $f(P''_2,P''_2,\ldots,P_n)=g(P''_1,P_3,\ldots,P_n)=a^{l+1}$. This contradicts the fact that voter $j$ is the dictator in $g$. Therefore, $f(P_1,P''_2,\ldots,P_n)=a$. Finally, consider the path $(a^{l+1},a^{l+2},\ldots,a^1=b)$. Since $f(P_1,P''_2,\ldots,P_n)=a$, by Claim \ref{C7}, we have that $f(P_1,P_2,\ldots,P_n)=a$.

Case 2: Voter $1$ is the dictator in $g$. We will show that $f$ satisfies dictatorship where voter $i\in \{1,2\}$ is the dictator in $f$. Pick an arbitrary profile of $n-2$ voters, $(P_3,P_4,\ldots,P_n)$. Now define a two-voter scf $h$ as follows: for any $(P_1,P_2)\in \mathbb{D}^2$,
$$h(P_1,P_2)=f(P_1,P_2,P_3,\ldots,P_n).$$ Since voter $1$ is the dictator in $g$ and $f$ is tops-only, it follows that $h$ satisfies unanimity. Moreover, since $f$ is tops-only and locally strategy-proof, it follows immediately that $h$ is tops-only and locally strategy-proof.
From IH, we know that $h$ satisfies dictatorship. To complete the case, we only need to show that the identity of the dictator does not depend on the profile of $n-2$ voters, $(P_3,P_4,\ldots,P_n)$. Suppose that it does depend on this profile. W.l.o.g., we assume that voter $1$ is dictator for $(P_3,P_4,\ldots,P_n)$ while 2 is dictator for $(P'_3,P'_4,\ldots,P'_n)$. Now progressively change preferences for each voter from $3$ through $n$ from the first profile to the second. There must be an
individual $j$ with $3\leq j\leq n$ such that $1$ is the dictator in $(P'_3,\ldots,P'_{j-1},P_j,P_{j+1}\ldots,P_n)$ while $2$ dictates in $(P'_3,\ldots,P'_{j-1},P'_j,P_{j+1}\ldots,P_n)$. Let $r_1(P_j)=a\neq b=r_1(P'_j)$. Pick $P_1$ and $P_2$ such that $r_1(P_1)=b\neq r_1(P_2)$. Let $(a^1=a,a^2,\ldots,a^k=b)$ be a path from $a$ to $b$ in $G(\mathbb{D})$. Note that $f(P_1,P_2,P'_3,\ldots,P'_{j-1},P_j,P_{j+1}\ldots,P_n)=b$. Therefore, by Claim \ref{C8}, we have that $f(P_1,P_2,P'_3,\ldots,P'_{j-1},P'_j,P_{j+1}\ldots,P_n)=b$. This contradicts the fact that $2$ dictates in $(P'_3,\ldots,P'_{j-1},P'_j,P_{j+1}\ldots,P_n)$. Therefore, we conclude that if voter $1$ is the dictator in $g$, then $f$ satisfies dictatorship where voter $i\in \{1,2\}$ is the dictator in $f$. 

Thus, the proof of Proposition \ref{P1} is complete.   
\end{proof}

\begin{proof}{Proof of Theorem \ref{thm:2}.} Let $\mathbb{D}$ be a minimally rich and L-tops-only domain. Consider an unanimous and local strategy-proof scf $f: \mathbb{D}^n\rightarrow A$. Since $\mathbb{D}$ is an L-tops-only domain, $f$ satisfies tops-onlyness.
If $\mathbb{D}$ is connected with distinct neighbours, then, by Proposition \ref{P1}, $f$ satisfies dictatorship. This concludes the proof of Theorem \ref{thm:2}. 
\end{proof}

\subsubsection*{2.3 The Proof of Theorem \ref{thm:3}}

\begin{prop}\label{P2} Let $\mathbb{D}$ be a minimally rich and connected with distinct neighbour domain. If $f: \mathbb{D}^2\rightarrow A$ satisfies unanimity and strategy-proofness, then it satisfies dictatorship.
\end{prop}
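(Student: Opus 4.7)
The plan is to reduce Proposition \ref{P2} to Proposition \ref{P1}. Since strategy-proofness implies local strategy-proofness, it suffices to show that every unanimous and strategy-proof scf $f:\mathbb{D}^2\to A$ is tops-only; then Proposition \ref{P1} yields dictatorship. Thus the entire effort is concentrated on establishing tops-onlyness under the stronger (global) strategy-proofness hypothesis, replacing the a priori tops-only assumption used in Proposition \ref{P1}.

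I would start from the standard option-set interpretation of strategy-proofness: for each fixed $P_2$, the map $P_1\mapsto f(P_1,P_2)$ selects the $P_1$-maximal element of $B_1(P_2)=\{f(Q,P_2):Q\in\mathbb{D}\}$, and symmetrically for voter $2$. Tops-onlyness in voter $1$'s coordinate is then equivalent to the statement that this maximizer depends only on $r_1(P_1)$. A useful easy consequence is that if some $Q\in\mathbb{D}$ with $r_1(Q)=a$ yields $f(Q,P_2)=a$, then every $P_1$ with $r_1(P_1)=a$ also yields $a$, because $a\in B_1(P_2)$ and no alternative is strictly better than $a$ under an ordering topped at $a$.

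The real work is showing that, for each fixed $P_2$, the map $P_1\mapsto f(P_1,P_2)$ is constant on each top class. I would proceed in two stages. \textbf{Stage 1 (top-connected preferences).} Suppose $Q\sim Q'$ with $r_1(Q)=r_1(Q')=a$, differing by a single swap of some $c,d\neq a$. Strategy-proofness forces $f(Q,P_2)=f(Q',P_2)$ unless $\{f(Q,P_2),f(Q',P_2)\}=\{c,d\}$. I would rule out this exceptional case by invoking the two-distinct-neighbours property of $\mathbb{D}^{TCC}(Q)$: by stepping out of $\mathbb{D}^{TCC}(Q)$ through a neighbour whose top is some $e\notin\{c,d\}$, pinning down $f$ via unanimity and strategy-proofness at that neighbour, and then using strategy-proofness to retract back to $Q$ and $Q'$, one derives a contradiction. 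Chaining this along a top-connected path shows $f(\cdot,P_2)$ is constant on $\mathbb{D}^{TCC}(P_1)$. \textbf{Stage 2 (bridging distinct top-connected components with the same top).} For $P_1,P'_1$ sharing a top $a$ but belonging to distinct top-connected closures, connectedness of $\mathbb{D}$ produces a path that temporarily leaves $a$ and eventually returns to it. Using strategy-proofness at each adjacency (where the top changes only between two specific alternatives) together with the two-distinct-neighbours property at each boundary crossing, I would force the outcome at $P'_1$ to agree with the outcome at $P_1$.

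The main obstacle, as I see it, is Stage 2. A top-connected closure can be as small as a single preference (as in Example \ref{ex4}), so Stage 1 alone is insufficient, and equality of $f$-values across different top-connected components sharing a common top must be derived globally. The two-distinct-neighbours condition is the structural input that makes this possible, since it guarantees enough directions of escape from and re-entry into any top class to enforce consistency. Once tops-onlyness is secured, invoking Proposition \ref{P1} (since strategy-proofness implies local strategy-proofness) closes the proof.
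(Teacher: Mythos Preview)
Your route is genuinely different from the paper's. The paper does \emph{not} first establish tops-onlyness and then invoke Proposition~\ref{P1}. Instead it re-runs the two-voter argument of Proposition~\ref{P1} directly, replacing every appeal to ``tops-onlyness + local strategy-proofness'' by an appeal to (global) strategy-proofness. The crucial step is the analogue of Claim~\ref{C1}: for $(a,b)\in E$, once $f(\bar P_1,\bar P_2)=a$ at the adjacent pair, global strategy-proofness lets one vary $P_2$ (possibly far from $\bar P_2$) while keeping the outcome in $\{a,b\}$, and a single manipulation argument rules out $b$; then strategy-proofness for voter~1 (whose top is $a$) forces $a$ for every $P_1$ with top $a$. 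Thus the paper proves tops-only behaviour only at the profiles it actually needs (tops on an edge of $G(\mathbb D)$, and along paths), never full tops-onlyness.

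Your plan, by contrast, commits to full tops-onlyness, and the sketch has real gaps. In Stage~1, the two-distinct-neighbours property guarantees neighbours of $\mathbb D^{TCC}(Q)$ with two \emph{distinct} tops $e_1\neq e_2$ (both $\neq a$), but nothing prevents $\{e_1,e_2\}=\{c,d\}$, so a neighbour with top $e\notin\{c,d\}$ need not exist. Even when it does, that neighbour is adjacent to \emph{some} $R'\in\mathbb D^{TCC}(Q)$, not to $Q$ or $Q'$; to ``retract back'' you would have to traverse the top-connected closure, and at every adjacency along the way the same $\{c',d'\}$ exceptional case can recur, so the argument is circular. Finally, ``pinning down $f$ via unanimity'' at the neighbour is unclear: $P_2$ is fixed with an arbitrary top, so unanimity does not bite. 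Stage~2 you yourself flag as open, and indeed bridging distinct top-connected closures with the same top by pure option-set reasoning appears to require essentially the full strength of the dictatorship conclusion.

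The payoff of the paper's approach is that it needs only ``edge-local'' tops-onlyness, which follows from a single global-manipulation step (voter~2 jumping from $\bar P_2$ to an arbitrary $P_2$ with the same top). That targeted use of strategy-proofness is exactly what your localized Stage~1/Stage~2 decomposition lacks; if you want to salvage your plan, the natural fix is to abandon full tops-onlyness and instead prove directly that $f$ is tops-only on profiles whose tops lie on an edge of $G(\mathbb D)$, which is precisely what the paper does.
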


\begin{proof} Let $\mathbb{D}$ be a minimally rich and connected with distinct neighbour domain. Let $f: \mathbb{D}^2\rightarrow A$ satisfies unanimity and strategy-proofness. We will show that $f$ satisfies dictatorship.

The proof steps, including the proofs of certain claims and lemmas, are identical to those in the proof of Proposition \ref{P1} for the two-voter case. As some of these proofs are exactly the same, we omit those details here.
 
Let $G(\mathbb{D})= (A,E)$ be the graph induced by $\mathbb{D}$. First we will show following claims.

\begin{claim}\label{C9} let $(a,b)\in E$. Then, either
\begin{enumerate}
\item[(i)] for all $(P_1,P_2)\in \mathbb{D}^2$ such that $r_1(P_1)=a$ and $r_1(P_2)=b$, $f(P_1,P_2)=a$, or
\item[(ii)] for all $(P_1,P_2)\in \mathbb{D}^2$ such that $r_1(P_1)=a$ and $r_1(P_2)=b$, $f(P_1,P_2)=b$.
\end{enumerate}
\end{claim} 

\begin{proof} Let $(\bar{P_1},\bar{P_2})\in \mathbb{D}^2$ such that $r_1(\bar{P_1})=r_2(\bar{P_2})=a$, $r_1(\bar{P_2})=r_1(\bar{P_1})=b$, and $\bar{P_1}\sim \bar{P_2}$. Such a profile exists because $(a,b)\in E$. We show that $f(\bar{P_1},\bar{P_2})$ is either $a$ or $b$. Suppose not, i.e., $f(\bar{P_1},\bar{P_2}) \neq a,b$. Then, voter $1$ can manipulate at $(\bar{P_1},\bar{P_2})$ via a preference ordering where $b$ is first ranked and obtain the outcome $b$ by unanimity. Therefore, $f(\bar{P_1},\bar{P_2})$ is either $a$ or $b$.

Suppose $f(\bar{P_1},\bar{P_2})=a$. Consider an arbitrary profile $(P_1,P_2)\in \mathbb{D}^2$ such that $r_1(P_1)=a$ and $r_1(P_2)=b$. We will show that $f(P_1,P_2)=a$. First we show that $f(\bar{P_1},P_2)\in \{a,b\}$. Suppose $f(\bar{P_1},P_2) \neq a,b$. Then voter $1$ can manipulate at $(\bar{P_1},P_2)$ via a preference ordering where $b$ is first ranked and obtain the outcome $b$ by unanimity. Next we show that $f(\bar{P_1},P_2)\neq b$. If $f(\bar{P_1},P_2)= b$, then voter $2$ can manipulate at $(\bar{P_1},\bar{P_2})$ via $P_2$. Therefore, $f(\bar{P_1},P_2)=a$. 
By strategy-proofness, we have that $f(P_1,P_2)=a$. Since $(P_1,P_2)$ is an arbitrary profile, we conclude that for all $(P_1,P_2)\in \mathbb{D}^2$ such that $r_1(P_1)=a$ and $r_1(P_2)=b$, $f(P_1,P_2)=a$

Suppose $f(\bar{P_1},\bar{P_2})=b$. By using similar arguments as we have used in the previous paragraph, we can establish that for all $(P_1,P_2)\in \mathbb{D}^2$ such that $r_1(P_1)=a$ and $r_1(P_2)=b$, $f(P_1,P_2)=b$. This completes the proof of the claim. 
\end{proof}

\begin{claim}\label{C10} Let $(a^1,\ldots,a^k)$ be a path in $G(\mathbb{D})$ with $k\geq 3$. Suppose, for all $(P_1,P_2)\in \mathbb{D}^2$ such that $r_1(P_1)=a^1$ and $r_1(P_2)=a^2$, we have $f(P_1,P_2)=a^1$. Then, for all $(P_1,P_2)\in \mathbb{D}^2$ such that $r_1(P_1)=a^{k-1}$, $r_1(P_2)=a^k$, we have $f(P_1,P_2)=a^{k-1}$.
\end{claim}

\begin{proof} The proof of claim \ref{C10} mirrors that of claim \ref{C2}, with the only modification being that claim \ref{C1} and local strategy-proofness in Lemma \ref{L4} are replaced by claim \ref{C9} and strategy-proofness, respectively. All other arguments are identical and are thus omitted.
\end{proof}

\begin{claim}\label{C11} Let $(a^1,\ldots,a^k)$ be a path in $G(\mathbb{D})$ with $k\geq 3$. Suppose, for all $(P_1,P_2)\in \mathbb{D}^2$ such that $r_1(P_1)=a^1$ and $r_1(P_2)=a^2$, we have $f(P_1,P_2)=a^1$. Then, for all $(P_1,P_2)\in \mathbb{D}^2$ such that $r_1(P_1)=a^1$ and  $r_1(P_2)=a^k$, we have $f(P_1,P_2)=a^1$.
\end{claim}
\begin{proof} Let $(a^1,\ldots,a^k)$ be a path in $G(\mathbb{D})$. We assume that for all $(P_1,P_2)\in \mathbb{D}^2$ such that $r_1(P_1)=a^1$ and $r_1(P_2)=a^2$, we have $f(P_1,P_2)=a^1$. We will show that for all $(P_1,P_2)\in \mathbb{D}^2$ such that $r_1(P_1)=a^1$ and  $r_1(P_2)=a^k$, we have $f(P_1,P_2)=a^1$.

First we show the following fact.

\begin{fact}\label{F1} Let $(a^1,\ldots,a^k)$ be a path in $G(\mathbb{D})$ with $k\geq 3$. Suppose, 
\begin{enumerate}
\item[(i)] for all $(P_1,P_2)\in \mathbb{D}^2$ such that $r_1(P_1)=a^1$ and $r_1(P_2)=a^2$, we have $f(P_1,P_2)=a^1$ and
\item[(ii)] for all $(P_1,P_2)\in \mathbb{D}^2$ such that $r_1(P_1)=a^2$ and $r_1(P_2)=a^k$, we have $f(P_1,P_2)=a^2$.
\end{enumerate}
 Then, for all $(P_1,P_2)\in \mathbb{D}^2$ such that $r_1(P_1)=a^1$ and $r_1(P_2)=a^k$, we have $f(P_1,P_2)=a^1$.
\end{fact}
\begin{proof} Let $(a^1,a^2,\ldots,a^k)$ be a path in $G(\mathbb{D})$. We assume that the conditions $(i)$ and $(ii)$ in fact 1 hold. Consider an arbitrary profile $(P_1,P_2)\in \mathbb{D}^2$ such that $r_1(P_1)=a^1$ and $r_1(P_2)=a^k$. We will show that $f(P_1,P_2)=a^1$. Consider orderings $P'_1$ and $P''_1$ such that $r_1(P'_1)=a^2$, $r_1(P''_1)=a^1$ and $P'_1\sim P''_1$. By condition $(ii)$, $f(P'_1,P_2)=a^2$. We will show that $f(P''_1,P_2)=a^1$. Since $f(P'_1,P_2)=a^2$, by strategy-proofness, we have  $f(P''_1,P_2)\in \{a^1,a^2\}$. Also note that $f(P''_1,P_2)\neq a^2$. Otherwise, voter 2 can manipulate at a profile where Voter $1$'s preference is $P''_1$ and top-tanked alternative in voter $2$'s preference is $a_2$, via $P_2$. Therefore, $f(P''_1,P_2)=a^1$. Finally, applying strategy-proofness, we have that $f(P_1,P_2)=a^1$. 
\end{proof}

Now we are ready to complete the proof of the claim. Let $l\in \{1,2,\ldots,k-1\}$. We will show that for all $(P_1,P_2)\in \mathbb{D}^2$ such that $r_1(P_1)=a^{k-l}$ and  $r_1(P_2)=a^k$, we have $f(P_1,P_2)=a^{k-l}$. We will use induction arguments to complete the proof.

First we show for the case where $l=1$. By Claim \ref{C10}, we have that for all $(P_1,P_2)\in \mathbb{D}^2$ such that $r_1(P_1)=a^{k-1}$ and  $r_1(P_2)=a^k$, we have $f(P_1,P_2)=a^{k-1}$. 

Next we assume that for some $l\in \{1,2,\ldots,k-2\}$ and for all $(P_1,P_2)\in \mathbb{D}^2$ such that $r_1(P_1)=a^{k-l}$ and  $r_1(P_2)=a^k$, we have $f(P_1,P_2)=a^{k-l}$. We will show that  for all $(P_1,P_2)\in \mathbb{D}^2$ such that $r_1(P_1)=a^{k-(l+1)}$ and  $r_1(P_2)=a^k$, we have $f(P_1,P_2)=a^{k-(l+1)}$. 

If $l=k-2$, we have for all $(P_1,P_2)\in \mathbb{D}^2$ such that $r_1(P_1)=a^2$ and  $r_1(P_2)=a^k$, we have $f(P_1,P_2)=a^2$. Therefore, by fact \ref{F1}, we conclude that for all $(P_1,P_2)\in \mathbb{D}^2$ such that $r_1(P_1)=a^1$ and  $r_1(P_2)=a^k$, $f(P_1,P_2)=a^1$. 

Suppose that $l<k-2$. Consider the path $(a^1,a^2,\ldots,a^{k-(l+1)},a^{k-l})$. By Claim \ref{C10}, we have that for all $(P_1,P_2)\in \mathbb{D}^2$ such that $r_1(P_1)=a^{k-(l+1)}$ and  $r_1(P_2)=a^{k-l}$, we have $f(P_1,P_2)=a^{k-(l+1)}$. By our assumption, we have that for all $(P_1,P_2)\in \mathbb{D}^2$ such that $r_1(P_1)=a^{k-l}$ and  $r_1(P_2)=a^k$, we have $f(P_1,P_2)=a^{k-l}$. Consider the path $(a^{k-(l+1)},a^{k-l},\ldots,a^k)$. Therefore, by fact \ref{F1}, we conclude that for all $(P_1,P_2)\in \mathbb{D}^2$ such that $r_1(P_1)=a^{k-(l+1)}$ and  $r_1(P_2)=a^k$, $f(P_1,P_2)=a^{k-(l+1)}$.

Therefore, by induction, for all $(P_1,P_2)\in \mathbb{D}^2$ such that $r_1(P_1)=a^1$ and  $r_1(P_2)=a^k$, we have $f(P_1,P_2)=a^1$.
This completes the proof of the claim.
\end{proof}

Note that voter $i$ is decisive over $a\in A$ if for all $P\in \mathbb{D}^n$ such that $r_1(P_i)=a$, we have $f(P)=a$. Voter $i$ is a dictator if she is decisive over all $a\in A$.

We complete the proof of the proposition by showing following lemmas.

\begin{lemma}\label{L7} For any distinct $a$ and $b$ in $A$, it is impossible that voter $1$ is decisive over $a$ and voter $2$ is decisive over $b$.
\end{lemma}
\begin{proof} The proof of lemma \ref{L7} is identical to the proof of lemma \ref{L2} and is therefore omitted here.
\end{proof}

\begin{lemma}\label{L8} Let $(a,b)\in E$. Then, either voter $1$ is decisive over $a$ or voter $2$ is decisive over $b$. 
\end{lemma}
\begin{proof} The proof of Lemma \ref{L8} is identical to that of Lemma \ref{L3}, with the following modification: Claims \ref{C1}, \ref{C2}, and \ref{C3} in Lemma \ref{L3} are replaced by Claims \ref{C9}, \ref{C10}, and \ref{C11}, respectively. All other arguments remain the same and are therefore omitted.
\end{proof}

\begin{lemma}\label{L9} For any $a\in A$, either voter $1$ or voter $2$ is decisive over $a$.
\end{lemma}
\begin{proof} The proof of Lemma \ref{L9} mirrors that of Lemma \ref{L4}, with the only modification being that Lemmas \ref{L2} and \ref{L3} in Lemma \ref{L4} are replaced by Lemmas \ref{L7} and \ref{L8}, respectively. All other arguments are identical and are thus omitted.
\end{proof}

\begin{lemma}\label{L10} Either voter $1$ is decisive over all alternatives or voter $2$ is decisive over all alternatives.
\end{lemma}
\begin{proof} The proof of Lemma \ref{L10} is identical to that of Lemma \ref{L5}, with the only modification being that Lemmas \ref{L2} and \ref{L4} in Lemma \ref{L5} are replaced by Lemmas \ref{L7} and \ref{L9}, respectively. All other arguments are the same and are thus omitted.
\end{proof}

Therefore, we conclude that $f: \mathbb{D}^2\rightarrow A$ satisfies dictatorship. This concludes the proof of proposition \ref{P2}.
\end{proof}

The following proposition shows that domains where unanimous and strategy-proof scfs satisfy dictatorship with two agents are also domains where unanimous and strategy-proof scfs satisfy dictatorship for an arbitrary number of agents, provided the domain satisfies the minimal richness condition. The proof can be found in Proposition $3.1$ of \cite{Aswal03}, hence omitted.

\begin{prop}(\cite{Aswal03}) \label{P3}
Let $\mathbb{D}$ be a minimally rich domain. Then, the following two statements are equivalent:
\begin{itemize}
 \item[(a)] $f:\mathbb{D}^2\rightarrow A$ is strategy-proof and unanimous $\Rightarrow$ $f$ satisfies dictatorship.
 \item[(b)] If $f:\mathbb{D}^n\rightarrow A$ is strategy-proof and unanimous $\Rightarrow$ $f$ satisfies dictatorship, $n\geq 2$.
\end{itemize}
\end{prop}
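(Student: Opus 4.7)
The plan is to prove Proposition \ref{P3} in two directions. The direction (b) $\Rightarrow$ (a) is immediate by specializing $n = 2$. For the substantive direction (a) $\Rightarrow$ (b), I would proceed by induction on $n$, with base case $n = 2$ being exactly (a). The argument closely mirrors the inductive step already carried out in the proof of Proposition \ref{P1}, but with strategy-proofness replacing local strategy-proofness throughout, so no graph-theoretic machinery or tops-onlyness hypothesis is required, since strategy-proofness propagates outcomes across arbitrary preference changes rather than only adjacent ones.

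For the inductive step, suppose the result holds for $n - 1$ voters and let $f : \mathbb{D}^n \to A$ be unanimous and strategy-proof. Define the cloned scf
\[ g(P_1, P_3, \ldots, P_n) \,=\, f(P_1, P_1, P_3, \ldots, P_n). \]
Unanimity transfers directly. For strategy-proofness, any beneficial deviation $P_1 \to P_1'$ by voter $1$ in $g$ would, by transitivity of $P_1$, force at least one of the two unilateral $f$-moves (voter $1$ from $P_1$ to $P_1'$, then voter $2$ from $P_1$ to $P_1'$) to be strictly $P_1$-improving, contradicting strategy-proofness of $f$. By the induction hypothesis, $g$ is dictatorial. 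If its dictator is some $j \in \{3, \ldots, n\}$, applying strategy-proofness to voter $2$'s preference change between $P_1$ and $P_2$ (starting from the known outcome $r_1(P_j)$ at $(P_1, P_1, P_{-\{1,2\}})$) shows that $j$ also dictates $f$. If instead voter $1$ dictates $g$, then for every fixed tail $(P_3, \ldots, P_n)$ the two-voter scf $h(P_1, P_2) := f(P_1, P_2, P_3, \ldots, P_n)$ is strategy-proof and is also unanimous: whenever $r_1(P_1) = r_1(P_2) = a$, we have $h(P_1, P_1) = a$ because voter $1$ dictates $g$, and strategy-proofness of voter $2$ together with $a = r_1(P_2)$ forces $h(P_1, P_2) = a$. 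Hence by (a), $h$ is dictatorial with dictator in $\{1, 2\}$.

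The main obstacle is the final step: showing that the identity of the dictator of $h$ is independent of the tail $(P_3, \ldots, P_n)$. I would argue by contradiction. If voter $1$ dictates $h$ for one tail and voter $2$ dictates $h$ for another, interpolating tails one coordinate at a time locates a single voter $j$ whose preference change flips the dictator. For any distinct $a, b \in A$, minimal richness supplies $P_1, P_2$ with tops $a$ and $b$, and strategy-proofness of voter $j$ across the flip then forces the relative order of $a$ and $b$ to reverse between $T_j$ and the altered $T_j'$. Applying this simultaneously to three distinct alternatives $a, b, c \in A$ (available since $|A| \geq 3$) yields a cyclic ranking in $T_j$, contradicting transitivity of $T_j$ and closing the induction.
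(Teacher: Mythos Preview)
The paper does not supply its own proof here; it simply cites \cite{Aswal03}. Your cloning-and-induction strategy is precisely the standard argument from that reference, and your treatment of Case~2 (unanimity of $h$ via voter $2$'s strategy-proofness, tail interpolation, and the three-alternative cycle in $P_j$) is correct.

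There is, however, a genuine gap in your Case~1. You claim that when some $j\in\{3,\ldots,n\}$ dictates $g$, ``applying strategy-proofness to voter $2$'s preference change between $P_1$ and $P_2$'' already shows that $j$ dictates $f$. It does not. From $f(P_1,P_1,P_{-\{1,2\}})=r_1(P_j)=:a$ and $f(P_1,P_2,P_{-\{1,2\}})=:c$, strategy-proofness of voter~$2$ in both directions yields only $a\,P_1\,c$ (or $a=c$) and $c\,P_2\,a$ (or $c=a$); neither constraint, nor the two together, force $c=a$. The missing step is to use the \emph{second} cloned profile: since $j$ dictates $g$, we also have $f(P_2,P_2,P_{-\{1,2\}})=g(P_2,P_3,\ldots,P_n)=a$. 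Now apply strategy-proofness to voter~$1$'s move from $P_2$ to $P_1$ at $(P_2,P_2,P_{-\{1,2\}})$: this gives $a\,P_2\,c$ (or $a=c$), which combined with $c\,P_2\,a$ finally forces $c=a$. So both cloned profiles and both voters' strategy-proofness are needed; a single application for voter~$2$ is insufficient.

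As a minor aside, your three-alternative cycle in Case~2 can be shortened to two: choosing tops $(a,b)$ for $(P_1,P_2)$ gives $a\,P_j\,b$, while choosing tops $(b,a)$ gives $b\,P_j\,a$, already a contradiction.
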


\begin{proof}{Proof of Theorem \ref{thm:3}.} Proposition \ref{P3} reduces the problem from an arbitrary number of voters to the two-voter case. The proof of theorem \ref{thm:3} follows directly from Proposition \ref{P2}.
\end{proof}

\subsection*{3. The Proof of Remark \ref{R6} and Remark \ref{R7}}

\begin{proof}{Proof of Remark \ref{R6}.} Let $f:(\mathbb{D}^*)^n\rightarrow A$ be a locally strategy-proof and unanimous scf. First, we will show that $f$ satisfies dictatorship. Consider the following sub-domain $\mathbb{D}=\mathbb{D}^*\setminus \{P_i^{13}\}$. Let $\Bar{f}$ be the restriction of $f$ on $\mathbb{D}$, i.e., $\forall P\in \mathbb{D}^n$, $\Bar{f}(P)=f(P)$. Since $f$ is locally strategy-proof and unanimous, $\Bar{f}$ is also locally strategy-proof and unanimous. Moreover, note that $\mathbb{D}$ (i.e., the domain in Example \ref{ex4}) is connected with distinct neighbours and belongs to $\mathbb{D}^{L-TOPS-ONLY}$. By Theorem \ref{thm:2}, $\Bar{f}$ satisfies dictatorship. W.l.o.g., we assume that agent $j$ is the dictator in $\Bar{f}$. We will show that $f$ satisfies dictatorship, where agent $j$ is the dictator, i.e.,  $\forall P\in (\mathbb{D}^*)^n$, $f(P)=r_1(P_j)$. 

Consider an arbitrary $P\in (\mathbb{D}^*)^n$. If $P\in \mathbb{D}^n$, then $f(P)=\Bar{f}(P)=r_1(P_j)$. Suppose $P\notin \mathbb{D}$. W.l.o.g., we assume that $\exists\; 1\leq m\leq n$ such that $\forall\; i\in \{1,\ldots,m\}$, $P_i=P_i^{13}$ and $\forall\; i>m$, $P_i\ne P_i^{13}$. We consider following cases.

Case 1: $1\leq j\leq m$. Let $\Bar{P}$ be a profile such that $\forall\; i\in \{1,\ldots,m\}$, $\Bar{P_i}=P_i^1$ and $\forall\; i>m$, $\Bar{P_i}=P_i$. Note that $\Bar{P}\in \mathbb{D}^n$. Therefore, $f(\Bar{P})=\Bar{f}(\Bar{P})=r_1(\Bar{P_j})=r_1(P_i^1)=a_1$. By local strategy-proofness 
\begin{align*}
f(P_i^1,\ldots,P_i^1,P_{m+1},\ldots, P_n)= & f(P_1,P_i^1,\ldots,P_i^1,P_{m+1},\ldots, P_n)\\
= & f(P_1,P_2,P_i^1,\ldots,P_i^1,P_{m+1},\ldots, P_n)\\
\vdots\\
=& f(P_1,\ldots,P_m,P_{m+1},\ldots, P_n)
\end{align*}

 Therefore, $f(P)=a_1=r_1(P_j)$.

 Case 2: $j>m$. Let $\Bar{P}$ be a profile such that $\forall\; i\in \{1,\ldots,m\}$, $\Bar{P_i}=P_i^1$ and $\forall\; i>m$, $\Bar{P_i}=P_i$. Note that $\Bar{P}\in \mathbb{D}^n$. Therefore, $f(\Bar{P})=\Bar{f}(\Bar{P})=r_1(\Bar{P_j})=r_1(P_j)$. We consider following two sub-cases.

 Sub-Case 2.1: Suppose $r_1(P_j)\in \{a_1,a_3,a_4\}$.  By local strategy-proofness, 
\begin{align*}
f(P_i^1,\ldots,P_i^1,P_{m+1},\ldots, P_n)= & f(P_1,P_i^1,\ldots,P_i^1,P_{m+1},\ldots, P_n)\\
= & f(P_1,P_2,P_i^1,\ldots,P_i^1,P_{m+1},\ldots, P_n)\\
\vdots\\
=& f(P_1,\ldots,P_m,P_{m+1},\ldots, P_n)
\end{align*}
 Therefore, $f(P)=r_1(P_j)$.

 Sub-Case 2.2: $r_1(P_j)=a_2$. Consider the profile $\hat{P}$ such that for all $i\ne j$ $\hat{P}_i=P_i$ and $\hat{P}_j=P_i^1$. From sub-case 2.1, $f(\hat{P})=r_1(\hat{P}_j)=r_1(P_i^1)=a_1$. Note that $P_i^1\sim P_i^2$. We will show that $f(P_i^2,\hat{P}_{-j})=r_1(P_i^2)=a_2$. By local strategy-proofness,  $f(P_i^2,\hat{P}_{-j})\notin \{a_3,a_4\}$.  Suppose  $f(P_i^2,\hat{P}_{-j})=a_1$. Then, by local strategy-proofness,
 \begin{align*}
 f(P_i^2,\hat{P}_{-j})=& f(P_i^1,\hat{P_2}\ldots,\hat{P}_m,\hat{P}_{m+1},\ldots,P_i^2,\ldots,\hat{P}_n)\\
 \vdots\\
=& f(P_i^1,\ldots,P_i^1,\hat{P}_{m+1},\ldots,P_i^2,\ldots,\hat{P}_n)\\
=& \Bar{f}(P_i^1,\ldots,P_i^1,\hat{P}_{m+1},\ldots,P_i^2,\ldots,\hat{P}_n)
 \end{align*}
 This contradicts the fact that $j$ is dictator in $\Bar{f}$. Therefore, $f(P_i^2,\hat{P}_{-j})\ne a_1$. Hence, $f(P_i^2,\hat{P}_{-j})=a_2$. Applying local strategy-proofness, we get that $f(P_j,\hat{P}_{-j})=a_2=r_1(P_j)$. Note that $P=(P_j,\hat{P}_{-j})$. Therefore, $f(P)=r_1(P_j)$.
 
 This complete the proof that $f$ satisfies dictatorship. Since $f$ satisfies dictatorship, it satisfies tops-onlyness. Hence, $\mathbb{D}^*$ is a L-tops-only domain. 
\end{proof}

\begin{proof}{Proof of Remark \ref{R7}.} Let $\mathbb{D}$ be a domain that is connected with distinct neighbours. By definition, $\mathbb{D}$ is connected. We show that $\mathbb{D}$ satisfies disagreement property. Consider a pair of alternatives $a$ and $b$ such that $r_1(P_i)=a$ and $r_1(P'_i)=b$ for some adjacent preferences $P_i,P'_i\in \mathbb{D}$. Since $\mathbb{D}$ is a connected with distinct neighbours domain, there exists $\bar{P_i}\in \mathbb{D}^{TCC}(P_i)$ such that $r_1(\bar{P_i})\notin \{a,b\}$ and $r_2(\bar{P_i})=a$. Similarly, there exists $\hat{P_i}\in \mathbb{D}^{TCC}(P'_i)$ such that $r_1(\hat{P_i})\notin \{a,b\}$ and $r_2(\hat{P_i})=b$. Note that $a\; \bar{P_i}\; b$ and $b\; \hat{P_i}\; a$. Hence, $\mathbb{D}$ satisfies disagreement property.   
\end{proof}

\end{document}